\documentclass[superscriptaddress,nofootinbib]{revtex4}
\usepackage{amsmath, amsthm, amssymb}
\usepackage{braket}
\usepackage{graphicx}
\usepackage{varwidth}
\usepackage{tikz}
\usepackage{subfigure}
\usepackage{quantikz} 
\usepackage[
    colorlinks=true,    
    linkcolor=blue,     
    citecolor=blue,     
    urlcolor=blue       
]{hyperref}

\newtheorem{theorem}{Theorem}[section]

\newtheorem{dfn}[theorem]{Definition}
\newtheorem{example}[theorem]{Example}

\begin{document}

\title{Alleviating Post-Linearization Challenges for Solving Nonlinear Systems on a Quantum Computer}

\author{Tayyab Ali}\thanks{alitayyab.pu@gmail.com}
\affiliation{Canadian Quantum Research Center, 460 Doyle Ave 106, Kelowna, BC V1Y 0C2, Canada}

\begin{abstract}

The linearity inherent in quantum mechanics limits current quantum hardware from directly solving nonlinear systems governed by nonlinear differential equations. One can opt for linearization frameworks such as Carleman linearization, which provides a high dimensional infinite linear system corresponding to a finite nonlinear system, as an indirect way of solving nonlinear systems using current quantum computers. We provide an efficient data access model to load this infinite linear representation of the nonlinear system, upto truncation order $N$, on a quantum computer by decomposing the Hamiltonian into the weighted sum of non-unitary operators, namely the Sigma basis. We have shown that the Sigma basis provides an exponential reduction in the number of decomposition terms compared to the traditional decomposition, which is usually done in a linear combination of Pauli operators.  Once the Hamiltonian is decomposed, we then use the concept of unitary completion to construct the circuit for the implementation of each weighted tensor product component $\mathcal{H}_{j}$ of the decomposition.
\end{abstract}

\maketitle

\section{Introduction}
In both classical and quantum computations, differential equations play a central role in the modeling of physical systems to understand complex relationships using mathematical representations. There are various contexts where differential equations are ubiquitous in various sectors, including science and technology, financial engineering\cite{duffy2013finite}, weather modeling and atmospheric dynamics\cite{steppeler2024short}, healthcare and business. However, analytical solutions to these equations are not always possible, so we rely on discrete models and direct numerical techniques\cite{butcher2016numerical,boyd2001chebyshev}. Numerical methods often require recursive calculations to obtain an approximation which brings us closer to the true solution of the system. It becomes more difficult for us to converge our solution to the actual one if we are dealing with nonlinear or higher-dimensional dynamical systems. Case in point, in physics, plasma is composed of a large number of charged and neutral particles, and the evolution of such a multi-species environment is governed by highly nonlinear partial differential equations. Although we have computing as a tool to tackle this problem, it also limits when it comes to computational time and cost. In addition, unlike linear systems, nonlinear systems, whose dynamics is governed by nonlinear differential equations, usually do not have closed-form (analytical) solution. These systems are very sensitive to initial conditions, and solving them numerically could result in error amplification.

The criteria for testing a computing device, whether it works on classical or quantum principles, lies in the efficiency of algorithm and accuracy of the solution. If it solves the problem efficiently and the proposed solution is not contradictory to the true solution, then we are good to go. Another test is computational resources, cost, and time. This is the point where classical computers are lacking. Current classical computers are struggling with large datasets and high-dimensional problems. Among many, two prime reasons for that are hardware limitations of classical machines as they cannot offer enough capacity of memory units for certain problems, those are intrinsically complex, and algorithmic bottlenecks due to exponential scaling of problem complexity. Even if we get a large storage capacity,  classical algorithms would still take an exponential amount of time to achieve the solution. This is because they use bits and binary logic for calculations, which limits the clock speed. Since most of the high-dimensional problems are nonlinear, we need to model complex interaction from particle to particle, making it impractical to solve on a classical computer. Moreover, we cannot simulate the quantum mechanical description of the system on a classical computer\cite{feynman2018simulating}. Consider, we are exploring many body quantum phenomena in which we aim to simulate the influence of spin of one particle to another. Since quantum particles are in superposition, we need a two-dimensional vector to store the amplitude of each particle of being in $\ket{0}$ and $\ket{1}$, that is, for 50 particles we need to store $\sim 2^{50}$ many amplitudes, which is beyond the capabilities of the classical memory storage. In a nutshell, we need to look beyond the classical infrastructure, where we could deal with energy cost and time, that is, we must let our computing devices be built on quantum-mechanical elements.

Quantum scientific computing has now moved from vision to real-word impact, where it can solve various computational tasks in feasible amounts of time and computational cost compared to processors that are bound to follow the laws of classical physics. The quantum advantage in polynomial and exponential speedup has been demonstrated in certain problems, including factoring\cite{shor1994proceedings} and unstructured search\cite{grover1996proceedings}. In \cite{arute2019quantum}, the authors demonstrated the ability of a quantum processor to outperform the classical supercomputer with exponential speedup. They performed random circuit sampling on a superconducting quantum processor, consisting of 53 qubits, in $200 s$ that a classical supercomputer would take $10,000\,\,years$. However, one cannot conclude that quantum processors are universally fast in all tasks; instead, they are capable of showing a quantum advantage in selected problems only. 

It is extraordinarily valuable to note that the most natural application of quantum computing is to solve linear systems\cite{harrow2009quantum,subacsi2019quantum,berry2014high,berry2017quantum,childs2017quantum,childs2020quantum,costa2022optimal}, since quantum mechanics is itself described by linear differential equations. However, most of the real-world systems are inherently nonlinear, and their dynamics is governed by nonlinear differential equations. For that, we need to develop quantum algorithms for nonlinear differential equations in order to integrate the quantum advantage in nonlinear dynamical systems or introduce nonlinear corrections into ordinary quantum mechanics, as nonlinear variants can solve problems that are difficult to solve with standard quantum theory\cite{rowinski2019foundations,childs2016optimal,abrams1998nonlinear}.

To address this problem, several attempts have been made to develop a quantum algorithm that can efficiently solve nonlinear dynamical systems. Since we are in the NISQ (Noisy Intermediate-Scale Quantum Computing) era, the best advantage we could take from our computing devices is to let them do what they are at best; that is, we need to develop hybrid classical-quantum algorithms. Such approaches usually utilize classical optimization for finding the best variational parameters, while a quantum computer is used to prepare parametrized quantum states for measuring the ground state approximation. Regarding our problem, one of these algorithms is given in \cite{lubasch2020variational}, where the authors have transformed a nonlinear differential equation into a minimization problem. Using the finite difference method, they defined a cost function as a sum of linear and nonlinear interactions,
$$
\mathcal{C} = <K> + <P> + <I>
$$
where K, P, and I represent the terms of kinetic energy, potential, and interaction, respectively. The idea is to get ground state approximation using minimization of this cost function and for that they have computed these terms, especially the nonlinear term, using a quantum nonlinear processing unit (QNPU) which takes different forms on a gate-based quantum hardware depending upon which term we are going to evaluate from our Hamiltonian. Since the accuracy of the approximation depends on the trial solution, they tested the algorithm for the variational quantum ansatz and matrix product state, from which the former proved advantageous over the latter. For nonlinear interaction, they prepared multiple copies of the trial solution and let them interact with an ancilla qubit using entangled operations. Although the method can be applied to a broad range of nonlinear terms, it requires high connectivity to implement non-local interactions.

Another hybrid approach was given by \cite{kyriienko2021solving}. This approach uses a quantum feature map to encode the variables as rotation angles of qubits, which naturally produces the Chebyshev polynomials that are used for complex function approximations. Despite not having a guaranteed computational cost and convergence after the optimization loop, this algorithm works for small instances. The authors defined the functions as expectation values of parametrized quantum circuits instead of directly encoding them in the amplitudes of a quantum state. In this way, they handled high-dimensional inputs in a more flexible way than \cite{lubasch2020variational}, because of which the circuit depth is decreased and they become more hardware efficient.

In addition to these hybrid algorithms, linearization frameworks have also been studied to find approximate global linear representations of nonlinear systems\cite{joseph2020koopman,tennie2025quantum}, so that we can solve them using existing quantum algorithms for linear systems. One such technique is \textit{Carleman linearization}, which takes a finite-dimensional nonlinear dynamical system and converts it into a linear system with infinite degrees of freedom\cite{vaszary2024carleman,kowalski1991nonlinear}. In the context of quantum computing, this work has been done in \cite{liu2021efficient}, where the authors provided a high-level theoretical description of this technique for dissipative nonlinear ODEs. However, especially in this case, linearization does not guaranty an immediate solution to the problem using existing quantum algorithms. After linearization, the system will have infinite degrees of freedom and one could lose all the quantum advantage by just loading this much data from a classical computer to a quantum computer. For this reason, efficient data access models should be discussed. Additionally, after loading, optimizing this large amount of data is also a central issue. Since they did not provide a practical implementation of their algorithm, their paper did not take into account these possible problems. In this paper, we provide a comprehensive discussion of the key challenges that may arise when using linearization workflow to solve nonlinear systems on a quantum computer, along with potential solutions.

The rest of the paper is organized as follows. For the sake of convenience and to better capture the post-linearization challenges, we describe Carleman linearization in Section. \ref{section-2}. We have used a quadratic model of Bernoulli's equation to illustrate the convergence of the solution with an increase in the truncation order $N$. In Section. \ref{section-3} we present our main results. In Sec. \ref{section-3.1}, we have shown how traditional LCU in Pauli operators work and investigated the quadratic growth of decomposition terms over matrix size. We then define a set of non-unitary operators, Sigma basis, in Sec. \ref{section-3.2} and prove that the decomposition terms grow linearly with the non-zero entries of the matrix. By decomposing the Hamiltonian into this new basis set, we imply the idea of unitary completion to construct circuit for the tensor product of the components of Sigma basis in Section. \ref{section-3.2.1}. Section. \ref{section-4} investigates how optimization can serve as our next problem in solving the truncated linear system using variational quantum algorithms. Finally, we conclude our work in Section. \ref{section-5} by describing some potential areas where this work can be applicable.

\section{Carleman Linearization}\label{section-2}
To analyze the subsequent difficulties or post-linearization challenges, it is essential to have a clear understanding of the underlying technique. In this section, we provide a general overview for the linearization of a $p^{th}$ order vector-valued ordinary differential equation using Carleman embedding. We obtain a large sparse structured matrix corresponding to the linear representation of the system using truncation.
\subsection{Nonlinear Vector-Valued Ordinary Differential Equations}
We consider a nonlinear dynamical system with multiple state variables evolving over time $t$. The complete description of the state of the system is described by an n-dimensional state vector, $\boldsymbol{\Phi} = [\phi_{1}, \phi_{2},\cdots,\phi_{n}]^{T} \in \mathbb{R}^{n}$, comprising all state variables. Taking into account both dependent and independent state contributions, the system will take the form
\begin{equation}\label{eq:five}
    \frac{d \boldsymbol{\Phi}}{dt} = \mathcal{M}_{0} + \mathcal{M}_{1} \boldsymbol{\Phi} + \sum_{k=2}^{p} \mathcal{M}_{k} \boldsymbol{\Phi}^{\otimes k}, \,\,\,\,\,\,\, \boldsymbol{\Phi}(0) = \boldsymbol{\Phi}_{in}
\end{equation}
where $\mathcal{M}_{0} \in \mathbb{R}^{n}$ is the constant vector representing state-independent effects such as driven force, $\mathcal{M}_{1} \in \mathbb{R}^{n \times n}$ is linearly coupled to the state vector showing linear contributions, and $\mathcal{M}_{k} \in \mathbb{R}^{n\times n^{k}}$ is responsible for describing nonlinear couplings of order $k$ to the state vector. The k-fold Kronecker power  $\boldsymbol{\Phi}^{\otimes k} \in \mathbb{R}^{n^{k}}$ comprises all monomials of $\boldsymbol{\Phi}(t)$.
The heart of the technique lies in the definition of an auxiliary state to couple a finite-dimensional nonlinear system into an infinite-dimensional linear system in the new state variables. We define 
\begin{equation}\label{def;aux-vector}
    \boldsymbol{y}_{j}:= \boldsymbol{\Phi}^{\otimes j}
\end{equation} 
With this definition,
\begin{equation}\label{eq:six}
    \frac{d \boldsymbol{y}_{j}}{dt} = \frac{d}{dt} \left(\underbrace{\boldsymbol{\Phi} \otimes \boldsymbol{\Phi} \otimes \cdots \boldsymbol{\Phi}}_{j\,times}\right) = \sum_{i=1}^{j} \boldsymbol{\Phi}^{\otimes (i-1)} \otimes \dot{\boldsymbol{\Phi}} \otimes \boldsymbol{\Phi}^{\otimes(j-i)} 
\end{equation}
substituting $\dot{\boldsymbol{\Phi}}$ from Eqn.(\ref{eq:five}) in Eqn.(\ref{eq:six}),
$$
 = \sum_{i=1}^{j} \boldsymbol{\Phi}^{\otimes (i-1)} \otimes \left(\mathcal{M}_{0} + \mathcal{M}_{1}\boldsymbol{\Phi} + \sum_{k=2}^{p} \mathcal{M}_{k} \boldsymbol{\Phi}^{\otimes k}\right) \otimes \boldsymbol{\Phi}^{\otimes(j-i)} 
$$
$$
= \sum_{i=1}^{j} \boldsymbol{\Phi}^{\otimes (i-1)} \otimes \mathcal{M}_{0} \otimes \boldsymbol{\Phi}^{\otimes(j-i)} + \sum_{i=1}^{j} \boldsymbol{\Phi}^{\otimes (i-1)} \otimes \mathcal{M}_{1}\boldsymbol{\Phi} \otimes \boldsymbol{\Phi}^{\otimes(j-i)} + \sum_{k=2}^{p}\sum_{i=1}^{j} \boldsymbol{\Phi}^{\otimes (i-1)} \otimes \mathcal{M}_{k} \boldsymbol{\Phi}^{\otimes k} \otimes \boldsymbol{\Phi}^{\otimes(j-i)}
$$
\begin{equation}
\begin{split}
    =\left[\sum_{i=1}^{j} \mathbb{I}^{\otimes (i-1)} \otimes \mathcal{M}_{0} \otimes \mathbb{I}^{\otimes(j-i)}\right]\boldsymbol{\Phi}^{\otimes (i-1)} \otimes \boldsymbol{\Phi}^{\otimes(j-i)} + \left[\sum_{i=1}^{j} \mathbb{I}^{\otimes (i-1)} \otimes \mathcal{M}_{1}  \otimes \mathbb{I}^{\otimes(j-i)} \right]\boldsymbol{\Phi}^{\otimes (i-1)} \otimes \boldsymbol{\Phi} \otimes \boldsymbol{\Phi}^{\otimes(j-i)} \\ + \left[\sum_{k=2}^{p}\sum_{i=1}^{j} \mathbb{I}^{\otimes (i-1)} \otimes \mathcal{M}_{k} \otimes \mathbb{I}^{\otimes(j-i)}\right]\boldsymbol{\Phi}^{\otimes (i-1)} \otimes \boldsymbol{\Phi}^{\otimes k} \otimes \boldsymbol{\Phi}^{\otimes(j-i)}
\end{split}
\end{equation}
we can now represent our system in terms of $\boldsymbol{y}$\footnote{$\underbrace{\boldsymbol{\Phi}^{\otimes (i-1)}}_{(i-1)\,terms} \otimes \underbrace{\boldsymbol{\Phi}^{\otimes(j-i)}}_{(j-i)\, terms} = \underbrace{\boldsymbol{\Phi}^{\otimes (k-1)}}_{i - 1 + j-i = j-1}$}
$$
    =\underbrace{\left(\sum_{i=1}^{j} \mathbb{I}^{\otimes (i-1)} \otimes \mathcal{M}_{0} \otimes \mathbb{I}^{\otimes(j-i)}\right)}_{\mathcal{A}_{j-1}^{(0)}}\boldsymbol{y}_{j-1} + \underbrace{\left(\sum_{i=1}^{j} \mathbb{I}^{\otimes (i-1)} \otimes \mathcal{M}_{1}  \otimes \mathbb{I}^{\otimes(j-i)} \right)}_{\mathcal{A}_{j}^{(1)}}\boldsymbol{y}_{j}   + \sum_{k=2}^{p} \underbrace{\left(\sum_{i=1}^{j} \mathbb{I}^{\otimes (i-1)} \otimes \mathcal{M}_{k} \otimes \mathbb{I}^{\otimes(j-i)}\right)}_{\mathcal{A}_{k+j-1}^{(p)}}\boldsymbol{y}_{k+j-1}
$$
\begin{equation}\label{eq:eight}
    \frac{d \boldsymbol{y}_{j}}{dt} = \mathcal{A}_{j-1}^{(0)} \boldsymbol{y}_{j-1} + \mathcal{A}_{j}^{(1)} \boldsymbol{y}_{j} + \sum_{k=2}^{p} \mathcal{A}_{k+j-1}^{(k)} \boldsymbol{y}_{k+j-1} 
\end{equation}
or
\begin{equation}\label{eq:nine}
    \frac{d \boldsymbol{y}_{j}}{dt} = \sum_{k=0}^{p} \mathcal{A}_{k+j-1}^{(k)} \boldsymbol{y}_{k+j-1}, \,\,\,\, j = 1,2,3,\cdots
\end{equation}
where $p$ denotes the degree of nonlinearity. Thus, we have coupled our system with new variables so that the resulting equations are linear. To enable a simplified analysis or a practical use case of the technique, we need to project Eqn.(\ref{eq:nine}) onto a finite-dimensional subspace spanned by finite number of monomials; that is, we must truncate the system to some finite order. This will provide us with an approximate representation of our original nonlinear system that will be computationally feasible. It is quite obvious that the accuracy of the solution will improve as the truncation order increases. However, it becomes challenging to obtain numerical stability at large truncation order, as it requires a large number of computational resources for simulation. Hence, we must take into account that the truncation order is balanced such that we can solve the system up to required accuracy. We define a truncated state vector $\boldsymbol{y} = [\boldsymbol{y}_{1}^{T} \boldsymbol{y}_{2}^{T} \cdots \boldsymbol{y}_{N}^{T}]^{T}$, whose dimension can be determined using the hierarchical tensor expansion for our original variable $\boldsymbol{\Phi}$. In Eqn.(\ref{def;aux-vector}), each lifted variable $\boldsymbol{y}_{j}$ lies in a tensor space of dimension $d^{j}$. Hence, at the truncation order $N$, the total number of components is just the sum of all tensor dimensions up to order N,
\begin{equation}\label{dimension}
    \text{dim}(\boldsymbol{y})= \sum_{i=1}^{N} d^{N} = \frac{d(d^{N} - 1)}{d-1}
\end{equation}
With this truncation, we are now able to write our original nonlinear ODE as a finite and linearized system.
\begin{equation}\label{final-truncated-matrix}
\frac{d}{dt}\begin{pmatrix}
        \boldsymbol{y}_{1}\\[2pt]
        \boldsymbol{y}_{2}\\[2pt]
        \boldsymbol{y}_{3}\\[2pt]
        \vdots \\[2pt]
        \boldsymbol{y}_{N-1}\\[2pt]
        \boldsymbol{y}_{N}\\[2pt]
    \end{pmatrix}
 \begin{pmatrix} 
\mathcal{A}_{0}^{(0)} & \mathcal{A}_{1}^{(1)} & \mathcal{A}_{2}^{(2)} & \mathcal{A}_{3}^{(3)} & \cdots & \mathcal{A}_{p}^{(p)} &  &  & \\[2pt]
 & \mathcal{A}_{1}^{(0)} & \mathcal{A}_{2}^{(1)} & \mathcal{A}_{3}^{(2)} & \cdots & \mathcal{A}_{p}^{(p-1)} & \mathcal{A}_{p+1}^{(p)} &  & \\[2pt]
 &  &  & \ddots & \vdots & \vdots & \ddots & \\[2pt]
 &  &  & \mathcal{A}_{\,N-2}^{(0)} & \mathcal{A}_{\,N-1}^{(1)} & \mathcal{A}_{\,N}^{(2)} & \cdots & \mathcal{A}_{\,p+N-2}^{(p)}\\[2pt]
 &  & & & \vdots & \vdots & \ddots & \vdots\\[2pt]
 &  &  &  & \mathcal{A}_{N-1}^{(0)} & \mathcal{A}_{N}^{(1)} & \cdots & \mathcal{A}_{p+N-2}^{(p-1)} & \mathcal{A}_{N+p-1}^{(p)} 
\end{pmatrix} = 
\begin{pmatrix}
        \boldsymbol{y}_{1}\\[2pt]
        \boldsymbol{y}_{2}\\[2pt]
        \boldsymbol{y}_{3}\\[2pt]
        \vdots \\[2pt]
        \boldsymbol{y}_{N-1}\\[2pt]
        \boldsymbol{y}_{N}\\[2pt]
    \end{pmatrix}
\end{equation}
This approximation is now computationally feasible. However, discarding contributions beyond $N$ could result in an approximation and truncation errors. Although we can increase $N$ to improve accuracy, it will significantly enlarge the size of the system and introduce complexity. Thus, it is essential to achieve the right balance between computational complexity and dynamical precision. The effect of truncation order on solution convergence and sparsity of the matrix in shown in Fig. \ref{fig:threefigs}.
\begin{figure}[h!]
    \centering
    \begin{minipage}{1\textwidth}
        \centering
        \includegraphics[width=\linewidth]{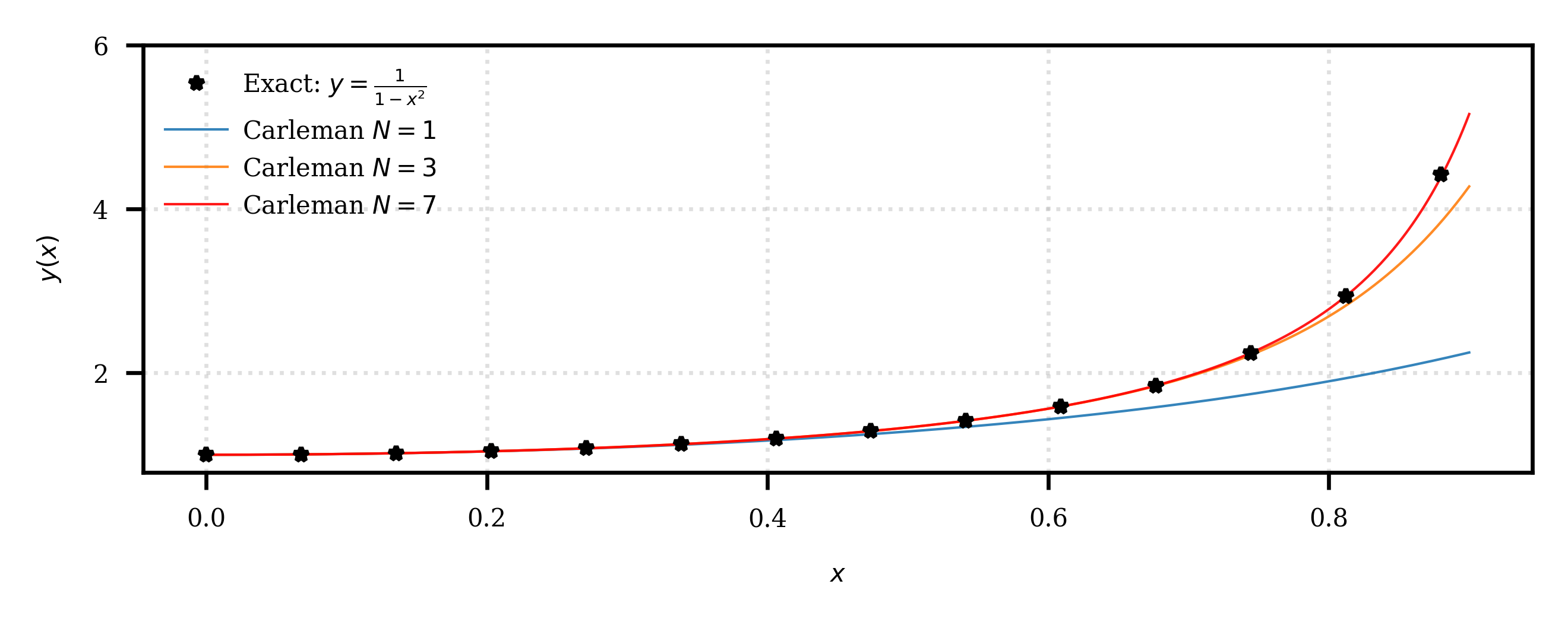}
    \end{minipage}
    \hspace{0.5em}
    \begin{minipage}{0.45\textwidth}
        \centering
        \includegraphics[width=\linewidth]{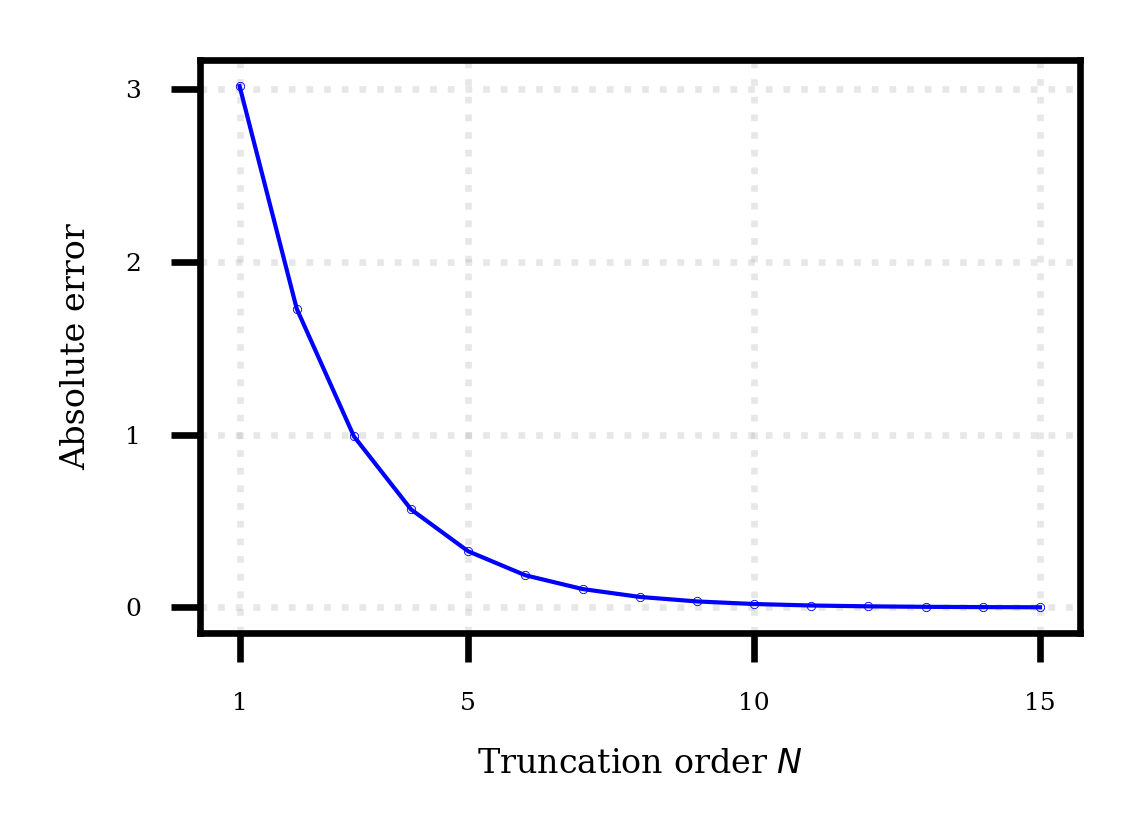}
    \end{minipage}
    \begin{minipage}{0.45\textwidth}
        \centering
        \includegraphics[width=\linewidth]{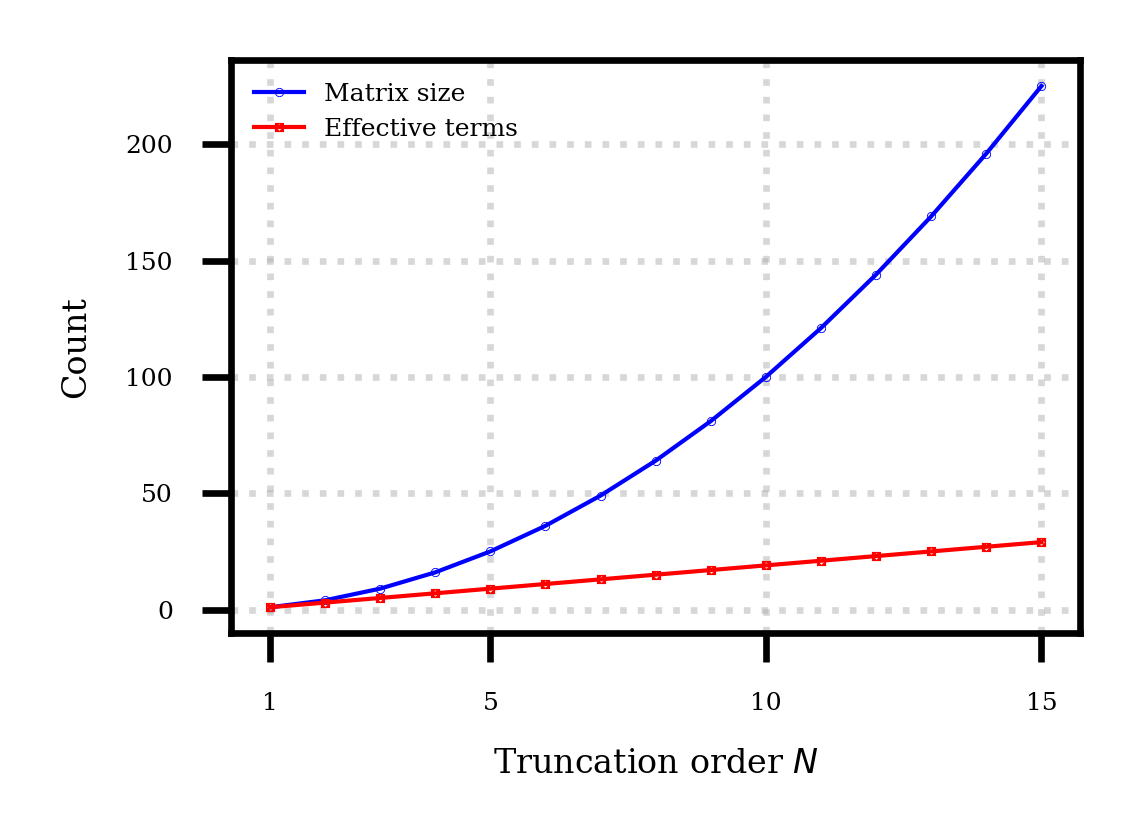}
    \end{minipage}

    \caption{Top: Systematic convergence of Carleman approximation for quadratic model of Bernoulli's equation with $P(x) = 2x$, $Q(x)=2x^{3}$, and $y(0)=1$. The solution is plotted for $x \geq 0$. Bottom Left: Progressive error reduction at increasing values of $N$. Bottom Right: At increasing order of $N$, matrix dimension scales quadratically, however, matrix remains sparse containing only $2N-1$ number of non-zero terms.(\textit{source code: \url{https://github.com/ali-tayyab/Carleman-Embedding}})}
    \label{fig:threefigs}
\end{figure}
\section{Towards Efficient Data Access Model}\label{section-3}
Although Eqn.(\ref{final-truncated-matrix}) reflects a linear system and there exist quantum algorithms to solve linear problems\cite{berry2014high,berry2017quantum,xin2020quantum,harrow2009quantum}, the solution of a Carleman-linearized system is still prohibited due to its quadratic growth over truncation order. As a result of Carleman embedding, we get a large sparse structured matrix which should be truncated at a higher value of $N$ to achieve better accuracy, making it infeasible to load this much data from a classical computer to a quantum computer. Hence, efficient data access models that can somehow reduce computational resources for data loading need to be discussed. This is the first post-linearization problem that one has to face if the linear system comes from the Carleman embedding routine. The underlying technique also does not guaranty that the resultant matrix after linearization is unitary. This is a common problem as most of the operators originating from physics or differential equations are non-unitary. However, quantum computers are capable of implementing only unitary operations. For this, we decompose our Hamiltonian and express it in terms of some unitary operators to process it on a quantum computer.

In this section, we discuss the first and foremost data access model in which the Hamiltonian is decomposed into the weighted sum of the \textit{Pauli basis}. It enables us to implement arbitrary operators on a quantum computer. However, in this case, the decomposition terms grow quadratically with the matrix size. Hence, we then provide a new basis set of non-unitary operators, which can better capture the structure of the Hamiltonian and provide efficient decomposition by minimizing the number of decomposition terms as compared to the Pauli basis.
\subsection{Linear Combination of Unitaries (LCU)}\label{section-3.1}
All we do in quantum computing is to simply process the quantum information using unitary operations. However, not all Hamiltonian models preserve norm and reversibility. LCU is an algorithmic primitive which allows us to represent a non-unitary operator in terms of some known unitaries to implement it on quantum states. This makes LCU a versatile tool that is used in the design of quantum algorithms. The main idea is the decomposition of a non-unitary Hamiltonian into a linear combination of unitary operators--probably the Pauli operators, as they form an orthonormal basis for all $2^{n} \times 2^{n}$ matrices for $n$ qubits, and we can decompose any matrix into these basis. Consider a non-unitary operator $\mathcal{H}$, then the decomposition will look like 
\begin{equation}\label{decomposition0}
    \mathcal{H} = \sum_{k=0}^{m-1} c_{k} U_{k}
\end{equation}
where $U$ satisfies $UU^{\dagger} =1$, and $c_{k} = \frac{1}{d} \textit{Tr}(U_{k}^{\dagger}\mathcal{H}) \in \mathbb{R}$ for $\mathcal{H}\mathcal{H}^{\dagger}=1$ where $\mathcal{H}$ acts on a d-dimensional Hilbert space. To proceed further, we use Block Encoding, another algorithmic primitive that encodes the decomposed Hamiltonian as a block of a large unitary operator, to apply this combination of unitaries with respective coefficients $c_{k}$. In this technique, we use a simple rule: Do, Apply, Undo. We introduce $\log_{2} m$ auxiliary qubits, with $m$ being the number of decomposition terms, and prepare them in a state whose amplitudes reflect the magnitude of the coefficients $c_{k}$. These qubits will help us to choose the right unitaries from our decomposition terms to apply with the right coefficients. This can be done by defining two operators, namely $PREP$ and $SELECT$. The $PREP$ will prepare the ancillary qubits in the state $\ket{\psi}$ by $PREP \ket{0}^{\otimes(\log_{2} m)}$, so that
\begin{equation}\label{ancilla-state}
    \ket{\psi} := \sum_{i=0}^{m-1} \sqrt{\alpha_{i}} \ket{i}
\end{equation}
where $\alpha_{i} = |c_{k}|/ \sum_{j} |c_{j}|$.
We then select the unitaries $U_{k}$ which we need to apply to our system qubits based upon the ancilla values.
\begin{equation} \label{SEL-op}
    SEL (\mathbb{P}) = \sum_{i=0}^{m-1} \ket{i} \bra{i} \otimes \mathbb{P}_{i}
\end{equation}
where $\mathbb{P}_{i} = \otimes_{i}\sigma_{i},\, \forall \sigma_{i} \in \{I, \sigma_{x}, \sigma_{y}, \sigma_{z}\}$, and $\sigma_{x}$, $\sigma_{y}$, and $\sigma_{z}$ denote the Pauli-X, Pauli-Y, and Pauli-Z operators, respectively\footnote{In rest of the paper, we'll denote the Pauli operators set by $\mathbb{P}$.}.
Lastly, we ``undo'' the ancilla superposition using $PREP^{\dagger}$. This will efficiently leave the effect of $\mathcal{H}$ on system qubits.

For the case in point, let us consider a $4 \times 4$ non-unitary sparse Hamiltonian and decompose it into the weighted sum of the Pauli basis.
\begin{equation}\label{decomposition}
    \mathcal{H} = \begin{pmatrix}
        1 &  &  & 0.5 \\
        & 0 &  &  \\
        &  & 0 &  \\
        0.5 &  &  & -1
    \end{pmatrix} = \frac{1}{2}\left(I \otimes \sigma_{z}\right) + \frac{1}{4}\left(\sigma_{x} \otimes \sigma_{x}\right) - \frac{1}{4}\left(\sigma_{y} \otimes \sigma_{y}\right) + \frac{1}{2}\left(\sigma_{z} \otimes I\right)
\end{equation}
Each component in the decomposition of $\mathcal{H}$ can be applied separately in the following way:
\begin{figure}[h!]
    \centering
    \begin{subfigure}
        \centering
        \scalebox{0.85}{
        \begin{quantikz}
            &  &
            \gategroup[2,steps=3,style={dashed,rounded
            corners,fill=blue!20, inner
            xsep=2pt},background,label style={label
            position=below,anchor=north,yshift=-0.2cm}]{{\sc $I \otimes \sigma_{z}$}} & \gate{I} & & & \\
            & &  & \gate{Z} & & &  
        \end{quantikz}
        }
    \end{subfigure}
    \begin{subfigure}
        \centering
        \scalebox{0.85}{
        \begin{quantikz}
            &  &
            \gategroup[2,steps=3,style={dashed,rounded
            corners,fill=blue!20, inner
            xsep=2pt},background,label style={label
            position=below,anchor=north,yshift=-0.2cm}]{{\sc $\sigma_{x} \otimes \sigma_{x}$}} & \gate{X} &  & & \\
            & &  & \gate{X} &  &  &
        \end{quantikz}
        }
    \end{subfigure}
    \begin{subfigure}
        \centering
        \scalebox{0.85}{
        \begin{quantikz}
            &  &
            \gategroup[2,steps=3,style={dashed,rounded
            corners,fill=blue!20, inner
            xsep=2pt},background,label style={label
            position=below,anchor=north,yshift=-0.2cm}]{{\sc $\sigma_{y} \otimes \sigma_{y}$}} & \gate{Y} &  & & \\
            & &  & \gate{Y} &  &  &
        \end{quantikz}
        }
    \end{subfigure}
    \begin{subfigure}
        \centering
        \scalebox{0.85}{
        \begin{quantikz}
            &  &
            \gategroup[2,steps=3,style={dashed,rounded
            corners,fill=blue!20, inner
            xsep=2pt},background,label style={label
            position=below,anchor=north,yshift=-0.2cm}]{{\sc $\sigma_{z} \otimes I$}} & \gate{Z} &  & & \\
            & &  & \gate{I} &  &  &
        \end{quantikz}
        }
    \end{subfigure}
\end{figure}

However, for their weighted sum, we prepare an auxiliary state as defined in Eqn.(\ref{ancilla-state}),
\begin{equation}\label{auxiliary-state-example}
\ket{\psi} = \sqrt{\frac{1}{2}}\ket{00} + \sqrt{\frac{1}{4}} \ket{01} + \sqrt{\frac{1}{4}} \ket{10} + \sqrt{\frac{1}{2}} \ket{11}
\end{equation}
To see how this logic will work, we consider an arbitrary state $\ket{\chi}$ and compute
\begin{equation}\label{}
    \textbf{U} = \left(\bra{0} \otimes \mathbb{I}\right) PREP^{\dagger}. SEL. PREP \left(\ket{0} \otimes \ket{\chi}\right)
\end{equation}
where $PREP$ acts on ancilla qubits only. This operator will prepare our ancilla qubits in the state $\ket{\psi}$, such that\footnote{$PREP\ket{0} = \ket{\psi}$ and $\bra{0}PREP^{\dagger} = \bra{\psi}$.}
$$
    = \left[\sqrt{\frac{1}{2}}\bra{00} + \sqrt{\frac{1}{4}} \bra{01} + \sqrt{\frac{1}{4}} \bra{10} + \sqrt{\frac{1}{2}} \bra{11}\right]SEL\left[\left(\sqrt{\frac{1}{2}}\ket{00} + \sqrt{\frac{1}{4}} \ket{01} + \sqrt{\frac{1}{4}} \ket{10} + \sqrt{\frac{1}{2}} \ket{11}\right) \otimes \ket{\chi}\right]
$$
Using Eqn.(\ref{SEL-op}), $SEL(\mathbb{P})$ will become $(\ket{0}\bra{0}\mathbb{P}_{0} + \ket{0}\bra{1}\mathbb{P}_{1} + \ket{1}\bra{0}\mathbb{P}_{2} + \ket{1}\bra{1}\mathbb{P}_{3})$, and hence

$$
= \left[\sqrt{\frac{1}{2}}\bra{00} + \sqrt{\frac{1}{4}} \bra{01} + \sqrt{\frac{1}{4}} \bra{10} + \sqrt{\frac{1}{2}} \bra{11}\right] \left[\left(\sqrt{\frac{1}{2}} \ket{00} \mathbb{P}_{0} + \sqrt{\frac{1}{4}} \ket{01} \mathbb{P}_{1}  + \sqrt{\frac{1}{4}} \ket{10} \mathbb{P}_{2}  + \sqrt{\frac{1}{2}} \ket{11} \mathbb{P}_{3}\right) \otimes \ket{\chi} \right]
$$
\begin{equation}\label{sum_of_unitaries}
    = \left(\frac{1}{2} \mathbb{P}_{0} + \frac{1}{4} \mathbb{P}_{1} + \frac{1}{4} \mathbb{P}_{2} + \frac{1}{2} \mathbb{P}_{3}\right) \ket{\chi}
\end{equation}
\begin{equation}\label{final-eq0}
     = \mathcal{H} \ket{\chi}
\end{equation}
which is precisely Eqn.(\ref{decomposition}) with $\{\mathbb{P}_{0},\mathbb{P}_{1},\mathbb{P}_{2},\mathbb{P}_{3}\} = \{I \otimes \sigma_{z}, \sigma_{x} \otimes \sigma_{x}, \sigma_{y} \otimes \sigma_{y}, \sigma_{z} \otimes I\}$. If we consider $\mathbb{I}$ as an arbitrary state instead of $\ket{\chi}$,
\begin{equation}\label{final-eq1}
    (\bra{0} \otimes \mathbb{I}) PREP^{\dagger}.SEL.PREP(\ket{0} \otimes \mathbb{I}) = \mathcal{H}
\end{equation}
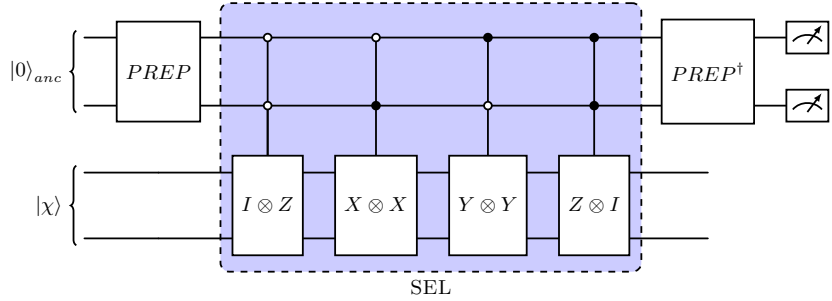
\begin{figure}[h!]
\centering
\scalebox{0.85}{
\begin{quantikz}
\lstick[2]{$\ket{0}_{anc}$}& \gate[2]{PREP} & \ctrl[open]{3}\gategroup[4,steps=4,style={dashed,rounded
corners,fill=blue!20, inner
xsep=2pt},background,label style={label
position=below,anchor=north,yshift=-0.2cm}]{{\sc
SEL}} & \ctrl[open]{1} & \ctrl{1} & \ctrl{1} & \gate[2]{PREP^{\dagger}} & \meter{} \\
&  & \ctrl[open]{1} & \ctrl{1} & \ctrl[open]{1} & \ctrl{1} & & \meter{}\\
\lstick[2]{$\ket{\chi}$}& & \gate[2]{I \otimes Z}& \gate[2]{X \otimes X} & \gate[2]{Y \otimes Y} & \gate[2]{Z \otimes I} & \\
& &  & & & &
\end{quantikz}
}
\caption{Circuit for controlled-$\mathbb{P}_{i}$ to implement the Hamiltonian described in Eqn.(\ref{decomposition}).}
\end{figure}
The joint operator $\textbf{U}$ acting on the total space, i.e, $\mathcal{H}_{total} = \mathcal{H}_{anc} \otimes \mathcal{H}_{sys}$, followed by the n-dimensional ancilla space and the N-dimensional system space, can be described as a block matrix determined by the ancilla basis. The action of $\textbf{U}$ on any arbitrary separable state, say $(\ket{j} \otimes \ket{\psi})$ will describe the evolution of the system under the ancilla transition $\ket{j} \xrightarrow{} \ket{i}$ using a block $U_{ji}$ lying inside the big $\textbf{U}$. In our case, Eqn.(\ref{final-eq1}) represents the projection of an ancilla back onto $\ket{0}$, giving the affective action $U_{00}$ as $\mathcal{H}$ that lies in the upper left corner of the large unitary $\textbf{U}$.
\begin{equation}
    \textbf{U} = \begin{bmatrix}
        \mathcal{H} & * \\
        * & *
    \end{bmatrix}
\end{equation}
The rest of the blocks of $\textbf{U}$ make it unitary, and since the ancilla starts and ends with $\ket{0}$, these blocks did not affect the encoded subspace.

Returning to our original problem, we need to implement a non-unitary coming from a large truncated system with $(2N+1)$ degrees of freedom/effective terms. Apart from how efficiently Block Encoding handles the implementation of the weighted tensor product of unitaries, the LCU decomposition in $Pauli\,basis$ usually results in quadratic growth of decomposition terms, which we cannot bear at this stage, as we already have a large system to work with. Let us illustrate this with a couple of examples. Consider the following sparse matrices,
\begin{equation*}
A_{1} =
\begin{pmatrix}
 2 &   &   &   \\
   & 0 &   &   \\
   &   & 0 &   \\
   &   &   & 7
\end{pmatrix}
,\qquad
A_{2} =
\begin{pmatrix}
 1 &   &   &   \\
 4 &   &   &   \\
 2 &   &   &   \\
 1 & -2 & 1 & -1
\end{pmatrix}
,\qquad
A_{3} =
\begin{pmatrix}
 1 &   &   &   \\
 4 & 3 &   &   \\
 2 & -2 & 2 &   \\
 1 & 1 & -1 & 1
\end{pmatrix}
\end{equation*}
The corresponding decomposition into the linear combination of $\mathbb{P}$ is given below.
\begin{equation}\label{eq:examples}
\begin{aligned}
A_{1}
&= 2.25 (I \otimes I)
 - 1.25 (I \otimes \sigma_{z})
 - 1.25 (\sigma_{z} \otimes I)
 + 2.25 (\sigma_{z} \otimes \sigma_{z}) \\
A_{2}
&= 1.25 (I \otimes \sigma_{x})
 - 1.25\, i (I \otimes \sigma_{y})
 + 0.5 (I \otimes \sigma_{z})
 + 0.25 (\sigma_{x} \otimes \sigma_{x})
 - 0.25 \, i (\sigma_{x} \otimes \sigma_{y})
+ (\sigma_{x} \otimes \sigma_{z})\\
&\quad
 - 0.25 \, i (\sigma_{y} \otimes \sigma_{x}) 
 - 0.25 (\sigma_{y} \otimes \sigma_{y})
 - i (\sigma_{y} \otimes \sigma_{z})
 + 0.5 (\sigma_{z} \otimes I) 
+ 0.75 (\sigma_{z} \otimes \sigma_{x})
 - 0.75 (\sigma_{z} \otimes \sigma_{y})\\
A_{3}
&= 1.75 (I \otimes I)
 + 0.75 (I \otimes \sigma_{x})
 - 0.75 \, i(I \otimes \sigma_{y})
 - 0.25 (I \otimes \sigma_{z}) 
+ 0.75 (\sigma_{x} \otimes I)
 - 0.25 (\sigma_{x} \otimes \sigma_{x})\\
&\quad
 - 0.75 \, i(\sigma_{x} \otimes \sigma_{y})
 + 0.25 (\sigma_{x} \otimes \sigma_{z}) 
- 0.75 \,i(\sigma_{y} \otimes I)
 + 0.25 \,i(\sigma_{y} \otimes \sigma_{x})
 - 0.75 (\sigma_{y} \otimes \sigma_{y})
 - 0.25 \,i(\sigma_{y} \otimes \sigma_{z})\\
&\quad
 + 0.25 (\sigma_{z} \otimes I)
 + 1.25 (\sigma_{z} \otimes \sigma_{x})
 - 1.25 \,i(\sigma_{z} \otimes \sigma_{y})
 - 0.75 (\sigma_{z} \otimes \sigma_{z})
\end{aligned}
\end{equation}

For a sparse structured matrix, an ideal decomposition would take the same number of terms as the non-zero entries in the matrix. However, Eqn.(\ref{eq:examples}) illustrates that this is not the case when we use the LCU decomposition in the tensor product of the Pauli basis. As a drawback, we need more ancillae and gates to implement during the circuit construction for Block Encoding, eventually making the circuit deep by increasing resource requirements. One could also face large statistical noise and slower convergence while implementing the circuit containing a large number of unitaries.
\subsection{Linear Combination of Non-Unitaries (LCNU)}\label{section-3.2}
We define a set of \textit{non-unitary} operators, namely the Sigma basis, which could result in ideal decomposition.
\begin{dfn}\label{definition-01}
    The Sigma basis set $\mathbb{S}$ is the set comprising the non-unitary operators\cite{liu2021variational},
\begin{equation*}
\mathbb{S} = \{\mathbb{I}_{2}, \sigma_{+},\sigma_{-}, \sigma_{+}\sigma_{-}, \sigma_{-}\sigma_{+}\}     
\end{equation*}
where $\mathbb{I}_{2}$ is the standard single qubit identity, and the rest are outer product operators defined as  
\begin{equation*}
\sigma_{+} = \ket{0}\bra{1} = 
    \begin{pmatrix}
        0 & 1 \\
        0 & 0
    \end{pmatrix}
    \quad
    \sigma_{-} = \ket{1}\bra{0} =
    \begin{pmatrix}
        0 & 0\\
        1 & 0
    \end{pmatrix}
\end{equation*}

\begin{equation*}
\sigma_{+}\sigma_{-} = \ket{0}\bra{0} = 
    \begin{pmatrix}
        1 & 0 \\
        0 & 0
    \end{pmatrix}
    \quad
    \sigma_{-}\sigma_{+} = \ket{1}\bra{1} =
    \begin{pmatrix}
        0 & 0\\
        0 & 1
    \end{pmatrix}
\end{equation*}
\end{dfn}
These operators, in contrast to $\mathbb{P}$, are localized to single entries, allowing us to assign a single operator to the single entry of the matrix. In this way, the decomposition will take as many terms as the matrix has non-zero entries, showing the linear growth. Hence, in the tensor product of $\mathbb{S}$, the analog of Eqn.(\ref{decomposition0}) will be
\begin{equation}\label{LCNU}
    \mathcal{H} = \sum_{j = 1}^{J} \alpha_{j} \mathcal{H}_{j}, \,\,\,\, \mathcal{H}_{j} = \otimes_{j}\sigma_{j},
\end{equation}
where $\sigma_{j} \in \mathbb{S}$. To illustrate $ J \leq |supp(\mathcal{H})|$, where $J$ is the total number of decomposition terms while the cardinality of $supp(\mathcal{H})$ represents the non-zero terms in $\mathcal{H}$, we again decompose the matrices described in Eqn.(\ref{eq:examples}) into the linear combination of Sigma basis\footnote{using $\sigma_{\pm} = \sigma_{+} \sigma_{-}$, and $\sigma_{\mp} = \sigma_{-}\sigma_{+}$ for convenience.}.
\begin{equation}
\begin{aligned}
A_{1} &= 2 \left(\sigma_{\pm} \otimes \sigma_{\pm}\right) + 7 \left(\sigma_{\mp} \otimes \sigma_{\mp}\right) \\
A_{2} &= (\sigma_{-} \otimes \sigma_{+}) + 2(\sigma_{-} \otimes \sigma_{\pm}) - 2(\sigma_{-} \otimes \sigma_{\mp}) + 4(\sigma_{\pm} \otimes \sigma_{-}) + (\sigma_{\pm} \otimes \sigma_{\pm}) \\
&\quad + (\sigma_{\mp} \otimes \sigma_{-}) - (\sigma_{\mp} \otimes \sigma_{\mp}) \\
A_{3} &= -2 (\sigma_{-} \otimes \sigma_{+}) + (\sigma_{-} \otimes \sigma_{-}) + 2 (\sigma_{-} \otimes \sigma_{\pm}) + (\sigma_{-} \otimes \sigma_{\mp}) + 4 (\sigma_{\pm} \otimes \sigma_{-}) \\
&\quad + (\sigma_{\pm} \otimes \sigma_{\pm}) + 3 (\sigma_{\pm} \otimes \sigma_{\mp}) - (\sigma_{\mp} \otimes \sigma_{-}) + 2 (\sigma_{\mp} \otimes \sigma_{\pm}) + (\sigma_{\mp} \otimes \sigma_{\mp})
\end{aligned}
\end{equation}
In each case, the decomposition cost scales linearly with non-zero entries of the matrix, providing an efficient decomposition compared to the Pauli basis in which decomposition terms grow quadratically or worse, making the whole system resource-inefficient. This shows that a sparse matrix having $p$ non-zero terms will have exactly $p$ terms in the Sigma basis expansion, showing an advantage over Pauli basis decomposition.

\begin{figure}[h!]
    \centering
    \includegraphics[width=0.45\linewidth]{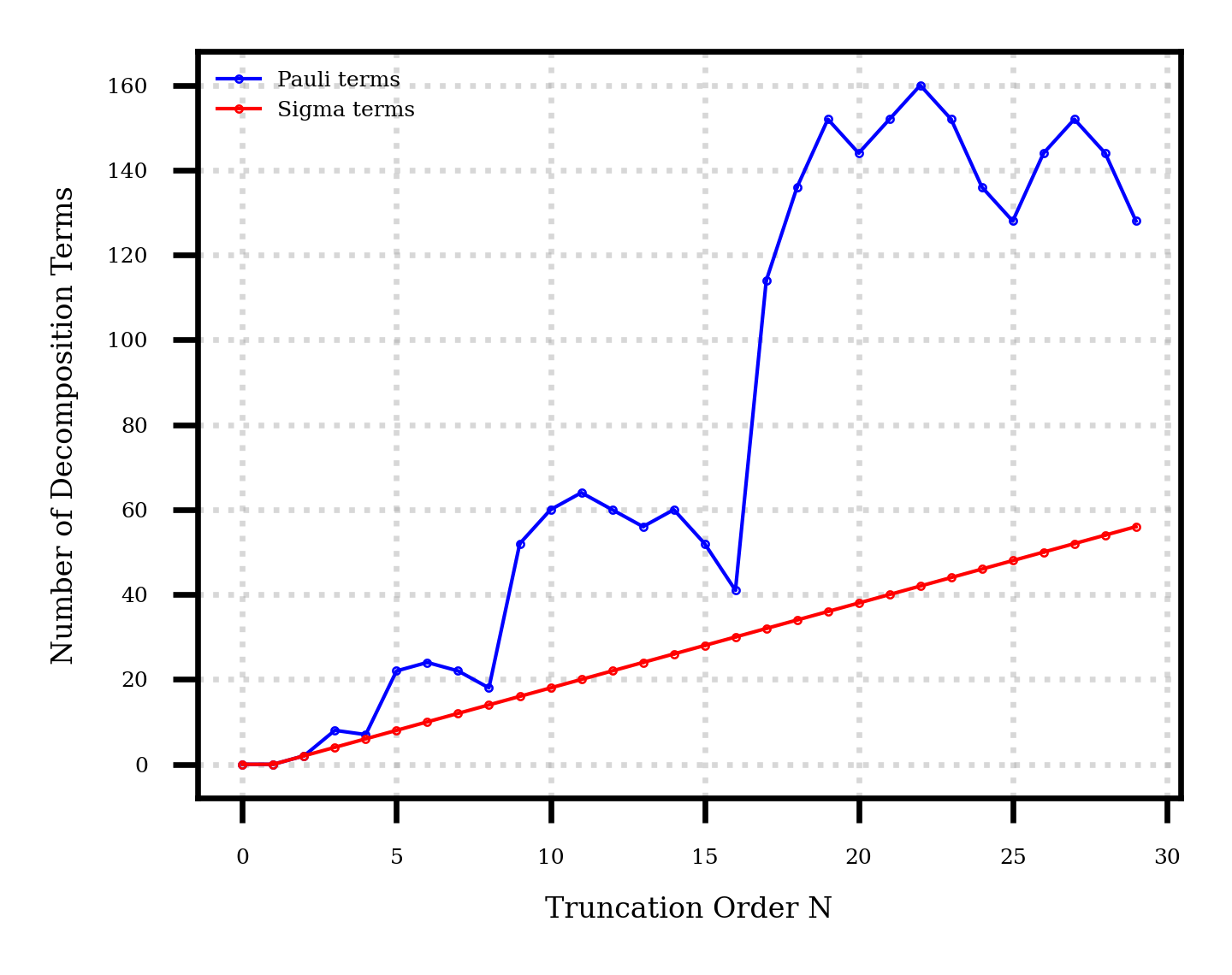}
    \caption{Difference between the number of terms in Sigma and Pauli basis decomposition for the quadratic model of Bernoulli's equation at increasing truncation order.(\textit{source code: \url{https://github.com/ali-tayyab/Carleman-Embedding}})}
    \label{fig:placeholder}
\end{figure}

However, $\mathbb{S}$ contains projection operators $\ket{j}\bra{k}$, which do not preserve the quantum state norm, i.e., they are \textit{not} unitary (except $\mathbb{I}_{2}$), and hence cannot be implemented as a gate by itself. Decomposing these non-unitaries back into \textit{Pauli basis} will destroy all the advantages as each component of $\mathbb{S}$ requires multiple Pauli operators for its decomposition, which will eventually take us back to our original problem of inefficient decomposition or quadratic growth of decomposition terms over matrix size. Hence, we must look at some other techniques to implement the tensor product of these non-unitaries in order to harness the efficiency of Sigma-decomposition. One such technique is unitary completion--an approach mathematically equivalent to block encoding which can manipulate the non-unitary operators to make them implementable using quantum circuits. In this approach, we can directly modify the non-unitary operator by finding the rows and columns that could be added to the matrix to make it unitary. Unlike block encoding, it works in the same Hilbert space and does not require any ancilla qubits, which makes it convenient and efficient for small matrices.
\subsubsection{Unitary Completion and Circuit Construction for $\mathcal{H}_{j}$}\label{section-3.2.1}
\begin{dfn}\label{definition-02}
    Consider a linear operator $\textbf{O}:U \xrightarrow{} V$, where $U$ and $V$ are complex vector spaces within $\mathbb{C}^{n}$ such that $U \subset V \subseteq \mathbb{C}^{n}$, which satisfy 
    $$
    \bra{u_{1}} \textbf{O}^{\dagger}\textbf{O} \ket{u_{2}} = \langle u_{1} | u_{2}\rangle\,\,\, \forall \ket{u_{1}},\ket{u_{2}} \in U
    $$
    then there always exists a unitary operator $\Bar{\textbf{O}}$ that serves as a unitary completion for \textbf{O}, whose action on $\ket{u}$ is the same as the action of \textbf{O} on $\ket{u}$.
    $$
    \textbf{O} \ket{u} = \Bar{\textbf{O}}\ket{u}
    $$
\end{dfn}
Using this definition, we can now define the resultant unitary matrix corresponding to $\mathcal{H}_{j}$ after unitary completion;
\begin{equation}
\bold{U}_{j} =
    \begin{bmatrix}
         \mathcal{H}_{j} & \Bar{\mathcal{H}_{j}} - \mathcal{H}_{j} \\
         \Bar{\mathcal{H}_{j}} - \mathcal{H}_{j} & \mathcal{H}_{j}
    \end{bmatrix}
\end{equation}
where $\Bar{\mathcal{H}_{j}}$ represents the unitary completion for $\mathcal{H}_{j}$. Since $\mathcal{H}_{j} = \otimes_{p} \sigma_{p}$, where $\sigma_{p} \in \mathbb{S}$, we need to define $\Bar{\sigma_{p}}$ for each component of $\mathbb{S}$.
\begin{theorem}\label{theorem-1}
The unitary completion for $\mathcal{H}_{j} = \otimes_{p}\sigma_{p}$, $\sigma_{p} \in \mathbb{S}$ is given by $\Bar{\mathcal{H}_{j}} = \otimes_{p}\Bar{\sigma_{p}}$, such that $\Bar{\sigma_{p}} = \sigma_{x}$ for $\{\sigma_{+}, \sigma_{-}\}$, and $\Bar{\sigma_{p}} = I$ for the rest of the $\mathbb{S}$ operators, where $I$ is the standard identity while $\sigma_{x}$ is the Pauli-X operator of $\mathbb{P}$. In this way, we define a new set $\Bar{\mathbb{S}}$, which comprises the unitary completions for $\mathbb{S}$.
$$
\Bar{\mathbb{S}} = \{I, \sigma_{x}, \sigma_{x}, I, I\}
$$
\end{theorem}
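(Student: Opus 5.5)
The plan is to reduce the statement to a single-qubit check and then lift it through the tensor product. Definition \ref{definition-02} asks for a subspace $U$ on which $\textbf{O}$ is isometric, together with a unitary $\Bar{\textbf{O}}$ that agrees with $\textbf{O}$ on $U$. So for each $\sigma\in\mathbb{S}$ I will identify the natural subspace $U_\sigma$, namely the orthogonal complement of the kernel, i.e.\ the support of $\sigma^\dagger\sigma$. I will then check that the proposed $\Bar{\sigma}\in\Bar{\mathbb{S}}$ is unitary and satisfies $\sigma\ket{u}=\Bar{\sigma}\ket{u}$ for all $\ket{u}\in U_\sigma$.

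\emph{Single-qubit cases.} Each case is a direct computation:
\begin{itemize}
\item For $\mathbb{I}_2$, take $U=\mathbb{C}^2$ and $\Bar{\sigma}=I$; this is trivial.
\item For $\sigma_+=\ket{0}\bra{1}$, we have $\sigma_+^\dagger\sigma_+=\ket{1}\bra{1}$, so $U=\mathrm{span}\{\ket{1}\}$. The operator is isometric there, and $\sigma_x\ket{1}=\ket{0}=\sigma_+\ket{1}$.
\item For $\sigma_-$, by symmetry $U=\mathrm{span}\{\ket{0}\}$ and $\sigma_x\ket{0}=\ket{1}=\sigma_-\ket{0}$.
\item For the projectors $\ket{0}\bra{0}$ and $\ket{1}\bra{1}$, take $U=\mathrm{span}\{\ket{0}\}$ and $U=\mathrm{span}\{\ket{1}\}$ respectively. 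Each projector acts as the identity on its range, so $\Bar{\sigma}=I$ works.
\end{itemize}
Pauli-$X$ and $I$ are unitary, so each element of $\Bar{\mathbb{S}}$ is a valid completion in the sense of Definition \ref{definition-02}.

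\emph{Lifting to tensor products.} Set $U_j=\bigotimes_p U_{\sigma_p}$. I will first verify the isometry condition on product vectors using $(\otimes_p\sigma_p)^\dagger(\otimes_p\sigma_p)=\otimes_p\sigma_p^\dagger\sigma_p$. Since $\sigma_p^\dagger\sigma_p$ restricts to the identity on $U_{\sigma_p}$, this gives $\bra{u_1}\mathcal{H}_j^\dagger\mathcal{H}_j\ket{u_2}=\langle u_1|u_2\rangle$ on product vectors, and sesquilinearity extends it to all of $U_j$. Next, $\otimes_p\Bar{\sigma}_p$ is unitary, because a tensor product of unitaries is unitary. On product vectors $\otimes_p\ket{u_p}$ it acts as $\otimes_p\Bar{\sigma}_p\ket{u_p}=\otimes_p\sigma_p\ket{u_p}=\mathcal{H}_j\otimes_p\ket{u_p}$. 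Linearity then gives agreement on all of $U_j$, since product vectors span $U_j$.

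\emph{Main obstacle.} The difficulty is interpretive rather than computational. Definition \ref{definition-02} does not say which subspace $U$ is meant, and completions are not unique; for instance, $\sigma_+$ could also be completed by $i\sigma_y$ up to phase on $\ket{1}$. I will therefore state explicitly that the theorem claims $\Bar{\mathbb{S}}$ is \emph{a} unitary completion, with respect to the maximal isometric subspace $(\ker\sigma)^\perp$, rather than the unique one. I would also note that $U_j$ is exactly $(\ker\mathcal{H}_j)^\perp$: $\mathcal{H}_j$ has rank one unless identities appear, and the kernel of a tensor product is spanned by vectors killed in at least one factor. Hence $\Bar{\mathcal{H}_j}$ reproduces $\mathcal{H}_j$ on its entire support. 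This is what the block-matrix construction $\textbf{U}_j$ preceding the theorem needs.
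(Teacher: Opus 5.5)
Your argument is correct, and it is more careful than the paper's own proof, though it reaches the same completions by a different route.

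\textbf{The paper's route.} The paper argues additively. It posits a complement $\sigma_p'$ obtained by the swaps $\sigma_{+}\leftrightarrow\sigma_{-}$ and $\sigma_{+}\sigma_{-}\leftrightarrow\sigma_{-}\sigma_{+}$ (with $\mathbf{0}$ for $\mathbb{I}_{2}$). It then computes $\sigma_p+\sigma_p'\in\{I,\sigma_x\}$ and stops there. It never identifies the subspace $U$ of Definition~\ref{definition-02}, and it never checks the agreement $\sigma_p\ket{u}=\Bar{\sigma}_p\ket{u}$. It also gives no argument for passing from single-qubit completions to $\otimes_p\Bar{\sigma}_p$.

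\textbf{Your route.} You fix $U_\sigma=(\ker\sigma)^\perp$, verify isometry and agreement factor by factor, and lift to tensor products by linearity on product vectors. This also shows that $\Bar{\mathcal{H}}_j$ agrees with $\mathcal{H}_j$ on all of $(\ker\mathcal{H}_j)^\perp$. Since $\mathcal{H}_j$ is a partial isometry, that is exactly what makes the block matrix $\textbf{U}_j$ unitary.

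\textbf{How the two views connect.} The formula $\sigma'=\Bar{\sigma}\,(I-\sigma^\dagger\sigma)$ reproduces the paper's set $\mathbb{S}'$ entry by entry, and it vanishes on $U_\sigma$. That vanishing is precisely the fact your agreement check makes explicit and the paper leaves implicit.

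\textbf{What each buys.} The paper's additive form supplies the explicit complement used later in $\textbf{U}_j$ and in Theorem~\ref{orthogonal-completeness}. Your version supplies an actual verification of the definition and of the tensor-product step. If you want to link your proof to that later use, note that at the multi-qubit level the complement is $\Bar{\mathcal{H}}_j-\mathcal{H}_j=\Bar{\mathcal{H}}_j(I-\mathcal{H}_j^\dagger\mathcal{H}_j)$, which is not in general $\otimes_p\sigma_p'$.
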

\begin{proof}
    Corresponding to \textbf{O} in Definition (\ref{definition-02}), there exists an orthogonal complement $\textbf{O}^{\;'}:= \Bar{\textbf{O}} - \textbf{O}$, and hence for each component of $\mathbb{S}$, such that
    $$
    \Bar{\sigma_{p}}=\sigma_{p} + \sigma_{p}^{\;'},\,\, \sigma_{p} \in \mathbb{S}
    $$
The claim is as follows.
        $$
        \sigma_{p}^{\;'} = \begin{cases}
            \sigma_{+} \leftrightarrow \sigma_{-}, \\ 
            \sigma_{+}\sigma_{-} \leftrightarrow \sigma_{-}\sigma_{+}
        \end{cases}
        $$
To justify it, we compute the unitary completion for each component of $\mathbb{S}$.
\begin{equation}
    \begin{split}
        \overline{\sigma_{+}\sigma_{-}} = \sigma_{+}\sigma_{-} + \sigma_{-}\sigma_{+} = \begin{pmatrix}
                1 & 0 \\
                0 & 1
            \end{pmatrix} = I \\
            \overline{\sigma_{-}\sigma_{+}} = \sigma_{-}\sigma_{+} + \sigma_{+}\sigma_{-} = \begin{pmatrix}
                1 & 0 \\
                0 & 1
            \end{pmatrix} = I \\
            \Bar{\sigma_{+}} = \sigma_{+} + \sigma_{-} = \begin{pmatrix}
                0 & 1 \\
                1 & 0
            \end{pmatrix} = \sigma_{x} \\
            \Bar{\sigma_{-}} = \sigma_{-} + \sigma_{+} = \begin{pmatrix}
                0 & 1 \\
                1 & 0
            \end{pmatrix} = \sigma_{x}
    \end{split}
\end{equation}
 Since $\mathbb{I}_{2}$ is the unitary operator of $\mathbb{S}$, its orthogonal complement will be \textbf{0}. In this way, we can define an ordered set $\mathbb{S}^{\;'}$ that contains the orthogonal complement of each component of $\mathbb{S}$.
   \begin{equation}
       \mathbb{S}^{\;'} = \{\bold{0}, \sigma_{-},\sigma_{+},\sigma_{-}\sigma_{+},\sigma_{+}\sigma{-}\}
   \end{equation}
\end{proof}
The action of $U_{j}$ on any arbitrary state, say $\ket{0} \otimes \ket{\chi}$, is given by
\begin{equation}\label{action-U}
    U_{j} \left(\ket{0} \otimes \ket{\chi}\right) = \ket{0}\mathcal{H}_{j}\ket{\chi} +  \ket{1} \Bar{\mathcal{H}_{j}} - \mathcal{H}_{j} \ket{\chi}
\end{equation}
To construct the circuit for $U_{j}$ efficiently, we split this unitary into the product of two unitaries.
\begin{equation}\label{split-unitary}
U_{j} = U_{j,a} U_{j,b}
\end{equation}
where,
\begin{equation}\label{1-splitter}
        U_{j,b} = \sigma_{x} \otimes \Bar{\mathcal{H}_{j}} = \begin{pmatrix}
            & \mathcal{H}_{j}^{'} + \mathcal{H}_{j} \\
            \mathcal{H}_{j}^{'} + \mathcal{H}_{j} & 
        \end{pmatrix}
\end{equation}
$$
    U_{j,a} = U_{j} (U_{j,b})^{T} = \begin{pmatrix}
         \mathcal{H}_{j} & \mathcal{H}_{j}^{'} \\
         \mathcal{H}_{j}^{'} & \mathcal{H}_{j}
    \end{pmatrix} \begin{pmatrix}
            & \mathcal({H}_{j}^{'})^{T} + \mathcal{H}_{j}^{T} \\
            (\mathcal{H}_{j}^{'})^{T} + \mathcal{H}_{j}^{T} & 
        \end{pmatrix}
$$
$$
 = \begin{pmatrix}
     \mathcal{H}_{j}^{'} (\mathcal{H}_{j}^{'})^{T} + \mathcal{H}_{j}^{'}\mathcal{H}_{j}^{T} & \mathcal{H}_{j}(\mathcal{H}_{j}^{'})^{T} + \mathcal{H}_{j} \mathcal{H}_{j}^{T} \\[3pt]
     \mathcal{H}_{j}(\mathcal{H}_{j}^{'})^{T} + \mathcal{H}_{j}\mathcal{H}_{j}^{T} & \mathcal{H}_{j}^{'}(\mathcal{H}_{j}^{'})^{T} + \mathcal{H}_{j}^{'} \mathcal{H}_{j}^{T} 
 \end{pmatrix}
$$
Here, $\mathcal{H}_{j}^{'} \mathcal{H}_{j}^{T} = \mathcal{H}_{j}(\mathcal{H}_{j}^{'})^{T} = 0$ (see Lemma. 1,\cite{gnanasekaran2024efficient}). Hence,
$$
U_{j,a}=\begin{pmatrix}
     \mathcal{H}_{j}^{'} (\mathcal{H}_{j}^{'})^{T} &  \mathcal{H}_{j} \mathcal{H}_{j}^{T} \\[3pt]
      \mathcal{H}_{j}\mathcal{H}_{j}^{T} & \mathcal{H}_{j}^{'}(\mathcal{H}_{j}^{'})^{T}  
 \end{pmatrix}
$$
\begin{theorem}\label{orthogonal-completeness}
    For $\sigma_{p} \in \mathbb{S}$ and $\sigma_{p}^{'} \in \mathbb{S}^{'}$, $(\mathcal{H}_{j}, \mathcal{H}_{j}^{'})$ forms a complete orthogonal pair, satisfying
    $$
    \mathcal{H}_{j}^{'} (\mathcal{H}_{j}^{'})^{T} + \mathcal{H}_{j} \mathcal{H}_{j}^{T} = I
    $$
\end{theorem}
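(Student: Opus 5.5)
The plan is to avoid expanding $\mathcal{H}_{j}^{'}=\Bar{\mathcal{H}_{j}}-\mathcal{H}_{j}$ term by term. That expansion would be a messy sum of mixed tensor products, since $\mathcal{H}_{j}^{'}$ is \emph{not} simply $\otimes_{p}\sigma_{p}^{'}$. Instead I would write both $\mathcal{H}_{j}$ and $\mathcal{H}_{j}^{'}$ as the same real orthogonal matrix $\Bar{\mathcal{H}_{j}}$ multiplied by complementary diagonal projectors.

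\textbf{Step 1: single-qubit factorization.} For each $\sigma_{p}\in\mathbb{S}$ I would write $\sigma_{p}=\Bar{\sigma_{p}}\,D_{p}$, where $\Bar{\sigma_{p}}$ is the completion from Theorem \ref{theorem-1} and $D_{p}$ is a diagonal $0/1$ projector. Concretely:
\begin{itemize}
\item $\sigma_{+}=\ket{0}\bra{1}=\sigma_{x}\ket{1}\bra{1}$;
\item $\sigma_{-}=\ket{1}\bra{0}=\sigma_{x}\ket{0}\bra{0}$;
\item $\sigma_{+}\sigma_{-}=I\,\ket{0}\bra{0}$ and $\sigma_{-}\sigma_{+}=I\,\ket{1}\bra{1}$;
\item $\mathbb{I}_{2}=I\cdot I$.
\end{itemize}
These are one-line $2\times2$ checks.

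\textbf{Step 2: tensor factorization.} Taking tensor products gives
\[
\mathcal{H}_{j}=\Bar{\mathcal{H}_{j}}\,D, \qquad D:=\otimes_{p}D_{p},
\]
where $D$ is a diagonal orthogonal projector ($D=D^{T}=D^{2}$). It then follows that
\[
\mathcal{H}_{j}^{'}=\Bar{\mathcal{H}_{j}}-\mathcal{H}_{j}=\Bar{\mathcal{H}_{j}}\,(I-D),
\]
with $I-D$ also a diagonal projector and $D(I-D)=0$.

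\textbf{Step 3: the computation.} Using $D^{2}=D$ and $(I-D)^{2}=I-D$,
\[
\mathcal{H}_{j}\mathcal{H}_{j}^{T}+\mathcal{H}_{j}^{'}(\mathcal{H}_{j}^{'})^{T}
=\Bar{\mathcal{H}_{j}}\bigl[D+(I-D)\bigr]\Bar{\mathcal{H}_{j}}^{T}
=\Bar{\mathcal{H}_{j}}\,\Bar{\mathcal{H}_{j}}^{T}.
\]
Because $\Bar{\mathcal{H}_{j}}$ is a tensor product of $I$'s and $\sigma_{x}$'s, it is real, symmetric and involutory. Hence $\Bar{\mathcal{H}_{j}}\Bar{\mathcal{H}_{j}}^{T}=I$, which proves the claim.

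As a by-product, the same factorization gives $\mathcal{H}_{j}^{'}\mathcal{H}_{j}^{T}=\Bar{\mathcal{H}_{j}}(I-D)D\,\Bar{\mathcal{H}_{j}}^{T}=0$. This recovers the orthogonality used just before the statement (Lemma 1 of \cite{gnanasekaran2024efficient}) rather than merely citing it.

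\textbf{Main obstacle.} The key step is Step 1: choosing the factorization $\sigma_{p}=\Bar{\sigma_{p}}D_{p}$ with the projector on the \emph{right}, so that the complement is a projector as well. Once that is in place, nothing remains but bookkeeping. I would also note explicitly that every object involved is real, so the transpose used in the statement coincides with the adjoint and the identity is genuinely the statement $\Bar{\mathcal{H}_{j}}^{\dagger}\Bar{\mathcal{H}_{j}}=I$.
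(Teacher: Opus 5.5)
Your proof is correct, and it takes a genuinely different route from the paper. The paper's proof is the single $2\times 2$ check $\sigma_{-}\sigma_{-}^{T}+\sigma_{+}\sigma_{+}^{T}=I$ (with $\sigma_{+}^{'}=\sigma_{-}$), followed by ``the rest can also be proved in a similar way.'' It therefore verifies the identity only for one-qubit operators. It never treats the tensor product $\mathcal{H}_{j}=\otimes_{p}\sigma_{p}$, which is the case actually used for $U_{j,a}$.

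That extension is not automatic. The one-qubit relations $\sigma_{p}\sigma_{p}^{T}+\sigma_{p}^{'}(\sigma_{p}^{'})^{T}=I$ do not tensor up by themselves: for $\mathcal{H}_{j}=\sigma_{+}\otimes\sigma_{+}$, pairing it with $\sigma_{-}\otimes\sigma_{-}$ gives $\ket{00}\bra{00}+\ket{11}\bra{11}\neq I$. Expanding $\Bar{\mathcal{H}_{j}}-\mathcal{H}_{j}=\otimes_{p}(\sigma_{p}+\sigma_{p}^{'})-\otimes_{p}\sigma_{p}$ also requires the cross relations $\sigma_{p}(\sigma_{p}^{'})^{T}=\sigma_{p}^{'}\sigma_{p}^{T}=0$.

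Your factorization $\mathcal{H}_{j}=\Bar{\mathcal{H}_{j}}D$, $\mathcal{H}_{j}^{'}=\Bar{\mathcal{H}_{j}}(I-D)$ packages both ingredients into $D+(I-D)=I$ and $D(I-D)=0$. This buys three things:
\begin{itemize}
\item it proves the $n$-qubit identity directly;
\item it makes explicit that $\mathcal{H}_{j}^{'}$ must be read as $\Bar{\mathcal{H}_{j}}-\mathcal{H}_{j}$ rather than $\otimes_{p}\sigma_{p}^{'}$;
\item it recovers the cited orthogonality lemma instead of importing it.
\end{itemize}
The paper's version is shorter but only establishes the $n=1$ case.

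One small correction to your closing remark: the side on which the projector sits is not essential. You could instead write $\sigma_{p}=\tilde{D}_{p}\Bar{\sigma_{p}}$, e.g.\ $\sigma_{+}=\ket{0}\bra{0}\sigma_{x}$ and $\sigma_{-}=\ket{1}\bra{1}\sigma_{x}$. This gives $\mathcal{H}_{j}=\tilde{D}\Bar{\mathcal{H}_{j}}$ and $\mathcal{H}_{j}^{'}=(I-\tilde{D})\Bar{\mathcal{H}_{j}}$, again with complementary projectors. Then $\mathcal{H}_{j}\mathcal{H}_{j}^{T}=\tilde{D}$ falls out immediately, and indeed $\tilde{D}=\Bar{\mathcal{H}_{j}}D\Bar{\mathcal{H}_{j}}^{T}$.
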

\begin{proof}
    For $\sigma_{+}$: 
    $$
    \sigma_{+}^{'} (\sigma_{+}^{'})^{T} + \sigma_{+} \sigma_{+}^{T} = \begin{pmatrix}
        0 & 0 \\
        1 & 0
    \end{pmatrix}\begin{pmatrix}
        0 & 1 \\
        0 & 0
    \end{pmatrix} + \begin{pmatrix}
        0 & 1 \\
        0 & 0
    \end{pmatrix} \begin{pmatrix}
        0 & 0 \\
        1 & 0
    \end{pmatrix} = \begin{pmatrix}
        1 & 0 \\
        0 & 1
    \end{pmatrix}
    $$
    The rest can also be proved in a similar way.
\end{proof}
According to Theorem. \ref{orthogonal-completeness}, $U_{j,a}$ will now take the form
\begin{equation}\label{Uja}
    U_{j,a} = \begin{pmatrix}
        I - \mathcal{H}_{j}\mathcal{H}_{j}^{T} & \mathcal{H}_{j}\mathcal{H}_{j}^{T} \\
         \mathcal{H}_{j}\mathcal{H}_{j}^{T}
         & I - \mathcal{H}_{j}\mathcal{H}_{j}^{T}
    \end{pmatrix}
\end{equation}
The circuit construction for $U_{j,b}$ is quite straightforward. It requires $n$ single qubit gates for $\Bar{\mathcal{H}_{j}}$, as $\Bar{\mathcal{H}_{j}} \in \Bar{\mathbb{S}}$, and one ancilla qubit for $\sigma_{x}$(see Eqn.\ref{1-splitter}). However, for $U_{j,a}$ we need to understand the structure of the matrix from Eqn.(\ref{Uja}). For every $\sigma_{p} \in \mathbb{S}$ in $\otimes_{p}\sigma_{p}$, $\mathcal{H}_{j}$ satisfies $\mathcal{H}_{j}\mathcal{H}_{j}^{T} \in \{\sigma_{+}\sigma_{-}, \sigma_{-}\sigma_{+}\}$. That is, for the $n+1$ qubit system, $U_{j,a}$ is a $2^{n + 1} \times 2^{n + 1}$ permutation matrix, which can be implemented using Toffoli gates only. Let us take a couple of examples and construct the circuit for $U_{j,a}$ and $U_{j,b}$ for each to form a full unitary $U_{j}$.
\begin{example}
We start from a $(3+1)$ qubit system. Consider $H_{j} = \sigma_{-} \otimes \sigma_{+}\sigma_{-} \otimes I$, and the corresponding unitary completion $\Bar{\mathcal{H}_{j}} = \sigma_{x} \otimes I \otimes I$ (see Theorem. \ref{theorem-1}). According to Eqn.(\ref{1-splitter}), $U_{j,b}$ will then take the form
$$
U_{j,b} = \sigma_{x} \otimes \Bar{\mathcal{H}_{j}} = \sigma_{x} \otimes \sigma_{x} \otimes I \otimes I
$$
which can be implemented using $2$ Pauli-X gates (one on the ancilla qubit and the other on one of the system qubit). On the other hand, for $U_{j,a}$, let us compute $\mathcal{H}_{j} \mathcal{H}_{j}^{T}$, 
$$
\mathcal{H}_{j} \mathcal{H}_{j}^{T} = \sigma_{-} \sigma_{-}^{T} \otimes \sigma_{+}\sigma_{-} (\sigma_{+}\sigma_{-})^{T} \otimes I
$$
$$
 = \begin{pmatrix}
     0 & 0 \\
     0 & 1
 \end{pmatrix} \otimes \begin{pmatrix}
     1 & 0 \\
     0 & 0
 \end{pmatrix} \otimes \begin{pmatrix}
     1 & 0 \\
     0 & 1
 \end{pmatrix}
$$
that yields a $8 \times 8$ matrix with only two non zero rows,
$$
\mathcal{H}_{j}\mathcal{H}_{j}^{T}= \begin{pmatrix}
\phantom{0} \\[-2mm]  
\phantom{0} \\[-2mm]
\phantom{0} \\[-2mm]
\phantom{0} \\[-2mm]
0 & 0 & 0 & 0 & 1 & 0 & 0 & 0 \\  
0 & 0 & 0 & 0 & 0 & 1 & 0 & 0 \\  
\phantom{0} \\[-2mm]
\phantom{0}
\end{pmatrix}
$$
These rows $R_{4}$ and $R_{5}$ correspond to the binary representation \{0100\} and \{0101\}, respectively. Here, the leftmost is the ancilla bit and the order is taken as $\ket{q_{anc} q_{1} q_{2} q_{3}}$. This means that the action of $\mathcal{H}_{j}$ will be non-zero on these basis states only. These active rows can be implemented using two $ C^{3} X$ gates, that is, flipping the ancilla qubit when the system qubits match the above binary pattern.
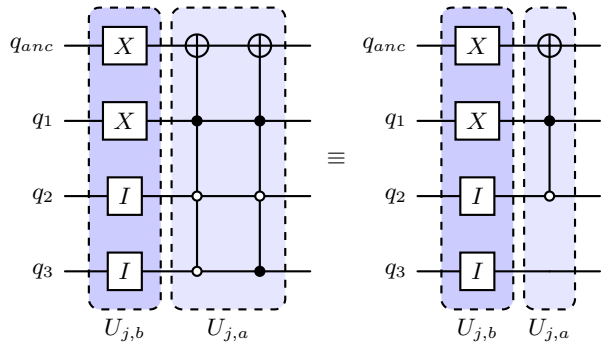
\begin{figure}[h!]
    \centering
    \begin{quantikz}
        \lstick{$q_{anc}$} & \gate{X}\gategroup[4,steps=1,style={dashed,rounded
         corners,fill=blue!20, inner
         xsep=2pt},background,label style={label
         position=below,anchor=north,yshift=-0.2cm}]{{\sc
         $U_{j,b}$}}  & \targ{} \gategroup[4,steps=2,style={dashed,rounded
         corners,fill=blue!10, inner
         xsep=2pt},background,label style={label
         position=below,anchor=north,yshift=-0.2cm}]{{\sc
         $U_{j,a}$}} & \targ{} & \\
        \lstick{$q_{1}$} & \gate{X}  & \ctrl{-1} & \ctrl{-1} &\\
        \lstick{$q_{2}$} & \gate{I} & \ctrl[open]{-1} & \ctrl[open]{-1} & \\
        \lstick{$q_{3}$} & \gate{I} & \ctrl[open]{-1} & \ctrl{-1} &
    \end{quantikz} $\equiv$  \begin{quantikz}
        \lstick{$q_{anc}$} & \gate{X}\gategroup[4,steps=1,style={dashed,rounded
         corners,fill=blue!20, inner
         xsep=2pt},background,label style={label
         position=below,anchor=north,yshift=-0.2cm}]{{\sc
         $U_{j,b}$}}  & \targ{} \gategroup[4,steps=1,style={dashed,rounded
         corners,fill=blue!10, inner
         xsep=2pt},background,label style={label
         position=below,anchor=north,yshift=-0.2cm}]{{\sc
         $U_{j,a}$}} & \\
        \lstick{$q_{1}$} & \gate{X}  & \ctrl{-1} & \\
        \lstick{$q_{2}$} & \gate{I} & \ctrl[open]{-1} & \\
        \lstick{$q_{3}$} & \gate{I} &  & 
    \end{quantikz}
    \caption{Circuit construction for $U_{j}$ corresponding to $\mathcal{H}_{j} = \sigma_{-} \otimes \sigma_{+}\sigma_{-} \otimes I$}
    \label{example-1-circuit-construction}
\end{figure}

\begin{theorem}\label{theorem-3}
    Two $C^{n} X$ which differ by one bit can combine to give one $C^{n-1} X$.
\end{theorem}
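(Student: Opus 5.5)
We will prove the statement by writing each multi-controlled gate as a projector-controlled flip and then adding the projectors. Label the $n$ control qubits $q_1,\dots,q_n$ and the target $t$. For a fixed bit pattern $b=(b_1,\dots,b_n)\in\{0,1\}^n$, where $b_i=1$ means a closed control and $b_i=0$ an open control, the gate $C^{n}X$ with pattern $b$ is
$$
C^{n}_{b}X = \left(I - \Pi_{b}\right)\otimes I_{t} + \Pi_{b}\otimes \sigma_{x}, \qquad \Pi_{b} = \bigotimes_{i=1}^{n} \ket{b_i}\bra{b_i},
$$
with the ordering of tensor factors suppressed. Equivalently, $C^{n}_{b}X = \exp\!\big(i\tfrac{\pi}{2}\,\Pi_b\otimes(I-\sigma_x)\big)$ up to a convention. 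The cleaner form is $C^n_bX = I + \Pi_b\otimes(\sigma_x - I)$.

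First we show that two such gates commute and that their product is $I + (\Pi_b+\Pi_{b'})\otimes(\sigma_x-I)$. This holds whenever $b\neq b'$, because then $\Pi_b\Pi_{b'}=0$ and the cross term vanishes. The same identity shows that two $C^nX$ gates on the same target with distinct patterns can be reordered, and that they simply accumulate their projectors.

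Next we use the hypothesis that $b$ and $b'$ differ in exactly one position $k$. Since $\ket{0}\bra{0}+\ket{1}\bra{1}=I$ on qubit $q_k$, we get
$$
\Pi_b+\Pi_{b'} = \bigotimes_{i\neq k}\ket{b_i}\bra{b_i}\otimes I_{q_k} = \Pi_{\tilde b},
$$
where $\tilde b$ is the common $(n-1)$-bit pattern on the remaining controls. Substituting this gives
$$
C^n_bX\,C^n_{b'}X = I + \Pi_{\tilde b}\otimes(\sigma_x - I) = C^{n-1}_{\tilde b}X,
$$
with qubit $q_k$ removed from the control set. This is exactly what happens in Figure \ref{example-1-circuit-construction}, where the two $C^3X$ gates differ only on $q_3$ and merge into a single $C^2X$ controlled by $q_1$ (closed) and $q_2$ (open).

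There is no deep obstacle here. The only care needed is a precise statement of the hypothesis: both gates must act on the same target and the same set of control qubits, with patterns at Hamming distance one. If the patterns coincide, the product is the identity rather than a $C^{n-1}X$. If they differ in two or more bits, the sum $\Pi_b+\Pi_{b'}$ is not a product projector, so no single multi-controlled gate results. We will state these conditions explicitly and add the remark that the merge applies repeatedly, in Karnaugh-map style, to the permutation $U_{j,a}$ of Eq.~(\ref{Uja}).
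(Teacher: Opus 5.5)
Your proof is correct. Its core idea matches the paper's, but it is carried out in operator language rather than Boolean logic.

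\textbf{The shared idea.} The paper writes each $C^{n}X$ as a firing condition $\bigwedge_i (q_i=b_i)$. It takes the OR of the two conditions, factors out the common literals, and notes that $(q_{n-1}=0\vee q_{n-1}=1)$ is a tautology. Your completeness relation $\ket{0}\bra{0}+\ket{1}\bra{1}=I$ on $q_k$ is exactly the operator counterpart of that tautology.

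\textbf{Where your route differs.} The difference is in how the two gates are merged. The paper simply asserts that the two gates can be ``combined using an OR operation.'' Two sequential controlled flips on a common target actually compose to a flip controlled by the XOR of their conditions. This coincides with the OR only because the two patterns are mutually exclusive, and the paper never makes that connection. Your identity
$(I+\Pi_b\otimes D)(I+\Pi_{b'}\otimes D)=I+(\Pi_b+\Pi_{b'})\otimes D$, with $D=\sigma_x-I$ and $\Pi_b\Pi_{b'}=0$, supplies precisely that missing step. As by-products it gives you the commutativity of the two gates and both of your boundary remarks: equal patterns cancel to the identity, and Hamming distance two or more yields no single multi-controlled gate.

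\textbf{What each approach buys.} The paper's Boolean phrasing is quicker and reads directly off the circuit diagrams, which suits the Karnaugh-style simplification of $U_{j,a}$. Yours is the version that stands as a complete proof. It also handles an arbitrary differing position $k$ rather than fixing it at $n-1$.
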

\begin{proof}
Let $\{b_{1}^{(1)}, b_{2}^{(1)}, \cdots, b_{n}^{(1)}\} \in \{0,1\}^{n}$, and $\{b_{1}^{(2)}, b_{2}^{(2)}, \cdots, b_{n}^{(2)}\} \in \{0,1\}^{n}$ be a binary representation of the two non-zero rows of the matrix $\mathcal{H}_{j}\mathcal{H}_{j}^{T}$, such that $b_{i}^{(1)} = b_{i}^{(2)}\, \forall i \neq n-1$, that is, both differ only in one bit. Let $q_{1},q_{2},\cdots,q_{n}$ be the system qubits, and $q_{anc}$ be the ancilla qubit that we want to flip according to conditions $q_{i} = b_{i}^{(1)}$ or $q_{i} = b_{i}^{(2)}$. We apply two $C^{n} X$ gates, such that the first is
$$
\oplus \,\,\, if (q_{1} = b_{1}^{(1)} \wedge q_{2} = b_{2}^{(1)} \wedge \cdots) 
$$
while the second one,
$$
\oplus\,\,\, if (q_{1} = b_{1}^{(2)} \wedge q_{2} = b_{2}^{(2)} \wedge \cdots)
$$
We can combine both of these conditions using an OR operation, such that
$$
(q_{1} = b_{1}^{(1)} \wedge q_{2} = b_{2}^{(1)} \wedge \cdots) \vee (q_{1} = b_{1}^{(2)} \wedge q_{2} = b_{2}^{(2)} \wedge \cdots)
$$
Since both differ in one bit, the rest can be taken as common as they satisfy $q_{i} = b_{i}^{(1)} = b_{i}^{(2)} \forall i \neq n-1$\footnote{Since $b_{i}^{(1)}= b_{i}^{(2)}$ except for $i = n-1$, we use $b_{i}^{(1)} = b_{i}^{(2)} = b_{i}$ $\forall i \neq n-1$.}.
$$
(q_{1} = b_{1} \wedge q_{2} = b_{2} \wedge \cdots ) \wedge (q_{n-1} = b_{n-1}^{(1)} \vee q_{n-1} = b_{n-1}^{(2)})
$$
where $b_{n-1}^{(1)}, b_{n-1}^{(2)} \in \{0,1\}$. Note that we have defined earlier $b_{n-1}^{(1)} \neq b_{n-1}^{(2)}$, which means 
$$
(q_{1} = b_{1} \wedge q_{2} = b_{2} \wedge q_{3} = b_{3} \cdots ) \wedge (q_{n-1} = 0 \vee q_{n-1} = 1)
$$
The expression $(q_{n-1} = 0 \vee q_{n-1} = 1)$ will always be true, so we are left with
$$
(q_{1} = b_{1} \wedge q_{2} = b_{2} \wedge q_{3} = b_{3} \wedge \cdots \wedge q_{n-2} = b_{n-2} \wedge q_{n} = b_{n})
$$
which is reduced to $C^{n-1} X$.
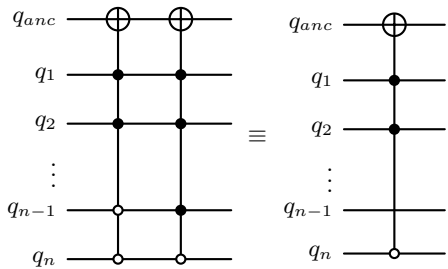
\begin{figure}[h!]
    \centering
    \begin{quantikz}
    \lstick{$q_{anc}$}& \targ{} & \targ{}& \\
    \lstick{$q_{1}$}& \ctrl{-1} & \ctrl{-1}& \\
    \lstick{$q_{2}$}& \ctrl{-1} & \ctrl{-1} &\\
    \lstick{\vdots}   \\
    \lstick{$q_{n-1}$}& \ctrl[open]{-2} & \ctrl{-2}& \\
    \lstick{$q_{n}$}& \ctrl[open]{-1} & \ctrl[open]{-1}&
\end{quantikz} $\equiv$
\centering
    \begin{quantikz}
    \lstick{$q_{anc}$}& \targ{} & \\
    \lstick{$q_{1}$}& \ctrl{-1} & \\
    \lstick{$q_{2}$}& \ctrl{-1} & \\
    \lstick{\vdots}   \\
    \lstick{$q_{n-1}$}&  & \\
    \lstick{$q_{n}$}& \ctrl[open]{-3} & 
\end{quantikz}
    \caption{Two $C^{n} X$ gates are combined into one $C^{n-1} X$.}
    \label{fig:placeholder}
\end{figure}
\end{proof}
Note that $U_{j,a}$ also contains two $C^{3} X$ gates, which we can reduce to one $C^{2} X$ according to Theorem \ref{theorem-3}. The binary representation of the non-zero rows of $\mathcal{H}_{j}\mathcal{H}_{j}^{T}$ yields the result that both $C^{3}X$ differ in $q_{3}$, which we can ignore to get one $C^{2}X$ (see Fig. \ref{example-1-circuit-construction}).
\end{example}
\begin{example}\label{U-construction-example-02}
    We now consider a (5+1) qubit system with $\mathcal{H}_{j} = \sigma_{+} \otimes \sigma_{+}\sigma_{-} \otimes I \otimes \sigma_{-} \otimes \sigma_{+}$, and the corresponding unitary completion $\Bar{\mathcal{H}_{j}} = \sigma_{x} \otimes I \otimes I \otimes \sigma_{x} \otimes \sigma_{x}.$ The $U_{j,b}$ will then require one extra $PauliX$ in addition to $\Bar{\mathcal{H}_{j}}$. For $U_{j,a}$, let first compute $\mathcal{H}_{j} \mathcal{H}_{j}^{T}$,
    $$
    \mathcal{H}_{j}\mathcal{H}_{j}^{T} = \begin{pmatrix}
        1 & 0 \\
        0 & 0
    \end{pmatrix} \otimes 
    \begin{pmatrix}
        1 & 0 \\
        0 & 0
    \end{pmatrix} \otimes
    \begin{pmatrix}
        1 & 0 \\
        0 & 1
    \end{pmatrix} \otimes
    \begin{pmatrix}
        0 & 0 \\
        0 & 1
    \end{pmatrix} \otimes
    \begin{pmatrix}
        1 & 0 \\
        0 & 0
    \end{pmatrix}
    $$
This will provide us a $32 \times 32$ matrix with two non-zero rows, as previous, having binary representation \{000010\} and \{000110\}. The notation is same as previous, that is, the most significant bit is ancilla. It will require two $C^{4} X$ gates that differ in the control operation on system qubit $q_{3}$, which we can ignore to form a single $C^{3} X$. The circuit for $U_{j}$ is given in Fig. \ref{fig.example-02}.
\begin{figure}[h!]
\centering
\begin{quantikz}
        \lstick{$\ket{q}_{anc}$} & \gate{X}\gategroup[6,steps=1,style={dashed,rounded
         corners,fill=blue!20, inner
         xsep=2pt},background,label style={label
         position=below,anchor=north,yshift=-0.2cm}]{{\sc
         $U_{j,b}$}}  & \targ{} \gategroup[6,steps=1,style={dashed,rounded
         corners,fill=blue!10, inner
         xsep=2pt},background,label style={label
         position=below,anchor=north,yshift=-0.2cm}]{{\sc
         $U_{j,a}$}} & \\
        \lstick{$q_{1}$} & \gate{X}  & \ctrl[open]{-1} & \\
        \lstick{$q_{2}$} & \gate{I} & \ctrl[open]{-1} & \\
        \lstick{$q_{3}$} & \gate{I} &  & \\
        \lstick{$q_{4}$} & \gate{X} & \ctrl{-2} & \\
        \lstick{$q_{5}$} & \gate{X} & \ctrl[open]{-1} &
    \end{quantikz}
    \caption{Circuit for $U_{j}$ corresponding to $\mathcal{H}_{j}$ defined in Example. \ref{U-construction-example-02}}
    \label{fig.example-02}
    \end{figure}
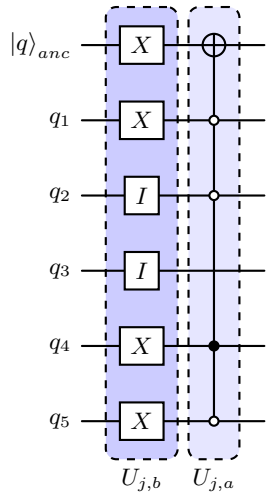
\end{example}
In the above examples, one could notice that, when we implement $U_{j,a}$, there is always an open control on the qubit if the corresponding term in $\mathcal{H}_{j}$ is $\sigma_{+}$ or $\sigma_{+}\sigma_{-}$. Similarly, a closed control appears on the qubit if the corresponding term satisfies $\mathcal{H}_{j}$ $\in \{\sigma_{-}, \sigma{-}\sigma{+}\}$. We can use this logic to implement $U_{j,a}$ for any $(n+1)$ qubit system. Let
$$
\mathcal{H}_{j} = \underbrace{\sigma_{+} \otimes \sigma_{-} \otimes \sigma_{+}\sigma_{-} \otimes \cdots \otimes \sigma_{-}\sigma_{+}\otimes \sigma_{+}}_{n \,\, terms} 
$$
\begin{figure}[h]
\centering
\begin{tikzpicture}
  \node[anchor=west] (circuit) at (0,0) {
    \begin{varwidth}{0.95\linewidth}
      \begin{quantikz}
        \lstick{$q_{anc}$} & \gate{X}\wire[d]{q}
          \gategroup[6,steps=1,style={dashed,rounded corners,fill=blue!20,inner xsep=2pt},background,
            label style={label position=below,anchor=north,yshift=-0.2cm}]{{\sc $U_{j,b}$}} & & 
          \targ{} 
          \gategroup[6,steps=1,style={dashed,rounded corners,fill=blue!10,inner xsep=2pt},background,
            label style={label position=below,anchor=north,yshift=-0.2cm}]{{\sc $U_{j,a}$}} & \\
        \lstick{$q_{1}$} & \gate[5]{\Bar{\mathcal{H}_{j}}}  & & \ctrl[open]{-1} & \\
        \lstick{$q_{2}$} &  & & \ctrl{-1}  & \\
        \lstick{\vdots} \\
        \lstick{$q_{n-1}$} &  & & \ctrl{1}  & \\
        \lstick{$q_{n}$} &  &  &\ctrl[open]{0} &
      \end{quantikz}
    \end{varwidth}
  };

  \node at ($(circuit.south east)+(-0.9cm,2.85cm)$) {\vdots};

\end{tikzpicture}

\caption{Circuit for $U_{j}$ corresponding to $\mathcal{H}_{j} = \otimes_{p} \sigma_{p}$ $\forall \sigma_{p} \in \mathbb{S}$, and $p = 1, \cdots n$.}
\label{figure-06}
\end{figure}
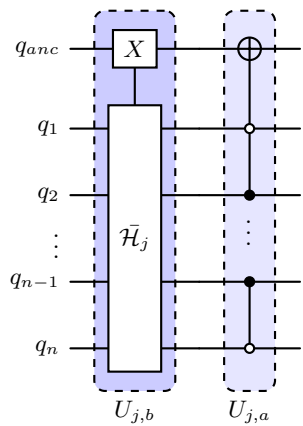

We have now investigated and built a formalism which led us to the implement of non-unitary operators using unitary completion. Our main focus was to implement $\mathcal{H}_{j}$, however, if we recall Eqn.(\ref{LCNU}) we need to implement the weighted sum of $\mathcal{H}_{j}$. As in Section. \ref{section-3.1}, we can define the block encoding for $U_{j}$. For that, we need ancillary qubits in the state $\ket{\psi}$ as described in Eqn.(\ref{ancilla-state}). The select operator for $U_{j}$, according to Eqn.(\ref{SEL-op}), will then take the form
\begin{equation}\label{SEL-op0}
    SEL\, \ket{0} \ket{i} \ket{\chi} = \ket{i} U_{j} \ket{0} \ket{\chi}
\end{equation}
where $\ket{\chi}$ is the state in which the system qubits are. Here, we leverage Theorem.4 and Lemma.7 from \cite{zhang2024circuit}, to construct the circuit for Eqn.(\ref{SEL-op0}), up to a certain depth of Clifford+T gates (\cite{zhang2024circuit}, see Lemma. 4). That is, the circuit for a select oracle for any general unitary function can be constructed based on the implementation of single-qubit controlled unitary gates. For the case in point, we re-define a general select oracle as 
$$
SEL (S) = \sum_{i=1}^{n} \ket{i} \ket{i} \otimes S_{i}
$$
where $S_{i} = \otimes_{j=1}^{n} S_{i,j}$ with $S_{i,j} \in \{I, \sigma_{x},\sigma_{y}, \sigma_{z}\}$. The authors have implemented each block $S_{i,j}$ from $SEL(S)$ as a controlled unitary. Fig. \ref{SEL-ORACLE} represents the circuit construction for controlled-$S_{i}$ using $n$ ancillae and target qubits. In a similar fashion, we can implement our unitary completion $U_{j}$ corresponding to the tensor product of non-unitaries $\mathcal{H}_{j}$. Consider $\mathcal{H}_{j,i}$ are the weighted tensor components of $\mathcal{H}$, such that
$$
\mathcal{H} = \alpha_{1} \mathcal{H}_{j,1} + \alpha_{2} \mathcal{H}_{j,2} + \cdots + \alpha_{n} \mathcal{H}_{j,n} \,\,\,\, (see\, Eqn.(\ref{LCNU}))
$$
According to Theorem. \ref{theorem-1}, the unitary completion for each block $\mathcal{H}_{j,i}$ will satisfy $\Bar{\mathcal{H}}_{j,i} \in \{I, \sigma_{x}\}$, that is, we can easily implement these completions in a similar way as we did for $S_{j,i}$. 
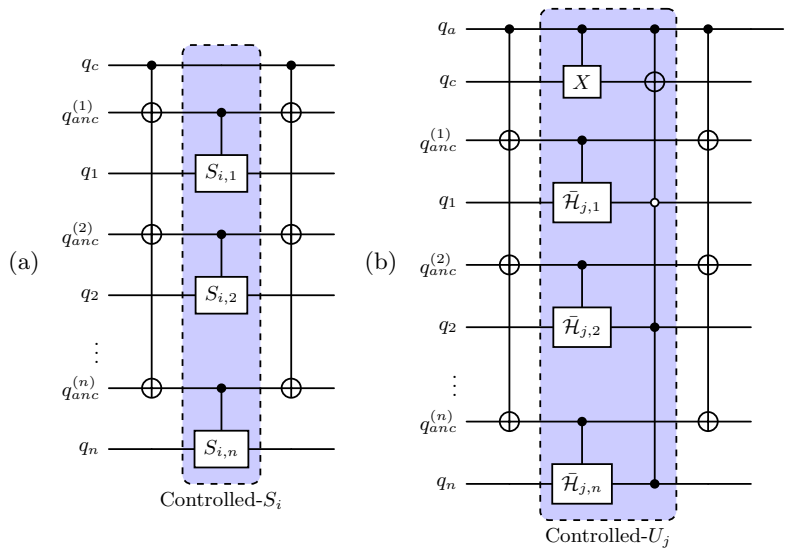
\begin{figure}[h!]
    \centering
    \begin{subfigure}
    {\small (a)}
        \centering
        \scalebox{0.85}{
        \begin{quantikz}
            \lstick{$q_{c}$}& \ctrl{6} & 
            \gategroup[8,steps=1,style={dashed,rounded
            corners,fill=blue!20, inner
            xsep=2pt},background,label style={label
            position=below,anchor=north,yshift=-0.2cm}]{{Controlled-$S_{i}$}}
            & \ctrl{6}& \\
            \lstick{$q_{anc}^{(1)}$}& \targ{} & \ctrl{1} & \targ{} & \\
            \lstick{$q_{1}$}& & \gate{S_{i,1}}& &  \\
            \lstick{$q_{anc}^{(2)}$}& \targ{} & \ctrl{1}& \targ{} &  \\
            \lstick{$q_{2}$}& & \gate{S_{i,2}} & &  \\
            \lstick{\vdots}\\
            \lstick{$q_{anc}^{(n)}$}& \targ{} & \ctrl{1}& \targ{} & \\
            \lstick{$q_{n}$} & & \gate{S_{i,n}} & &
        \end{quantikz}
        }
    \end{subfigure}
    \begin{subfigure}
        {\small (b)}
        \centering
        \scalebox{0.85}{
        \begin{quantikz}
            \lstick{$q_{a}$}& \ctrl{7} & \ctrl{1}
            \gategroup[9,steps=2,style={dashed,rounded
            corners,fill=blue!20, inner
            xsep=2pt},background,label style={label
            position=below,anchor=north,yshift=-0.2cm}]{{Controlled-$U_{j}$}}
            & \ctrl{1} & \ctrl{7} & & \\
            \lstick{$q_{c}$}&  & \gate{X} & \targ{}&  & \\
            \lstick{$q_{anc}^{(1)}$}& \targ{} & \ctrl{1} &  & \targ{} & \\
            \lstick{$q_{1}$}& & \gate{\Bar{\mathcal{H}}_{j,1}}& \ctrl[open]{-2}& &  \\
            \lstick{$q_{anc}^{(2)}$}& \targ{} & \ctrl{1}& & \targ{} &  \\
            \lstick{$q_{2}$}& & \gate{\Bar{\mathcal{H}}_{j,2}} & \ctrl{-2}& & \\
            \lstick{\vdots}\\
            \lstick{$q_{anc}^{(n)}$}& \targ{} & \ctrl{1}& & \targ{} &\\
            \lstick{$q_{n}$} & & \gate{\Bar{\mathcal{H}}_{j,n}} & \ctrl{-3}& &
        \end{quantikz}
        }
    \end{subfigure}
    \label{SEL-ORACLE}
    \caption{(a) Circuit representing a general SELECT oracle, where each weighted tensor component is a Pauli-string. $q_{c}$ is the controlled qubit, $q_{anc}^{(i)}$ are ancillae, and $q_{i}$ are target qubits. (b) Circuit representing SELECT oracle for the unitary completion operator $U_{j}$ corresponding to $\mathcal{H}_{j}$, where each weighted tensor component satisfies $\Bar{\mathcal{H}}_{j,i} \in \{I, \sigma_{x}\}$. Note that we need one extra ancilla $q_{a}$ for the implementation of $\Bar{\mathcal{H}}_{j,i}$.}
\end{figure}
\section{Curse of Dimensionality}\label{section-4}
Leveraging the Sigma basis to decrease the number of decomposition terms will result in the reduction of per-term implementation cost; however, it does not change the fact that the Carleman embedded system is an enormous linear system whose divergence requires a large value of the truncation order $N$. That is, though we have now an alternative decomposition basis that requires only one basis component for each non-zero entry of the matrix, yet the decomposition data is very large, which could eventually cause curse of dimensionality; a main reason for barren plateau phenomena in solving quantum systems using variational quantum computing algorithms. This phenomenon refers to a vanishing gradient problem in which the cost function, which is being minimized using the variational parameters, becomes flat, making the variational circuit untrainable. It first appeared in the domain of quantum neural networks\cite{mcclean2018barren}. 

In our case, the problem of interest is a Carleman embedded linear system that we aim to solve using variational quantum linear solvers algorithm. Most of the problems in this context are considered as an optimization task, through which, we quantify the quality of the solution using the minimization of a cost function through trainable parametrized quantum circuit. This is a hybrid approach in which we leverage quantum hardware for estimating the loss and then use classical optimizers to update the parameters. The parameter optimization becomes very hard in case of large Hilbert space and deeper circuits, which is precisely the case of a Carleman embedded linear system. Generally, the cost function takes the form
\begin{equation}\label{cost-function}
    l_\theta (O,\phi) = Tr\left[O U(\theta) \phi U(\theta)^{\dagger}\right]
\end{equation}
where $O$ is an observable, $\phi$ is the initial state of the system, and $U$ is any parametrization that needs to be optimized by updating $\theta$. In classical optimization, our aim is to achieve $\Bar{\theta}$ such that
\begin{equation}\label{optimized-param}
    \Bar{\theta} = arg_{\theta} \, \text{min}\, \mathcal{C}(\theta)
\end{equation}
To reach $\Bar{\theta}$, gradient-based methods have been widely used. One such method is the gradient-decent algorithm,
\begin{equation}\label{gradient-decent}
    \theta^{t+1} = \theta^{t} - \beta \nabla_{\theta^{t}} \mathcal{C}(\theta^{t})
\end{equation}
where $\beta$ is the learning rate. The more formal and mathematical definition of the barren plateau can be obtained using Chebyshev's inequality\cite{friedrich2025barren},
\begin{equation}\label{Chebyshev-inequality}
    \mathbb{P} \left(\left|\partial_{\mu} \mathcal{C} - \langle \partial_{\mu}\mathcal{C} \rangle\right|\geq \eta\right) \leq \frac{\mathbb{V}[\langle \partial_{\mu} \mathcal{C}\rangle]}{\eta^{2}}
\end{equation}
which represents an upper bound on the probability of the deviation of the partial derivative of $\mathcal{C}$, with respect to the variational parameter $\theta_{\mu}$, from its mean value $\langle \partial_{\mu} \mathcal{C} \rangle$ by a value greater than or equal to $\eta$. This inequality followed by $\langle \partial_{\mu} \mathcal{C} \rangle = 0$ \cite{cerezo2021cost,mcclean2018barren} will result in the original definition of the problem:
\begin{equation}\label{barren-plateau}
    \mathbb{V}\left[\langle \partial_{\mu} \mathcal{C} \rangle\right] \leq \mathcal{B}(n) \propto \mathcal{O}\left(\frac{1}{a^{n}}\right), a>1
\end{equation}
This states that the distribution will now start squeezing tightly around 0 and will just provide a narrow spike. That is, for small values of the variance, the gradient will hardly move away from 0. One unique way to address this problem lies in the choice of a cost function. Generally, any observable $O$ from Eqn.(\ref{cost-function}) can be described in two different ways (a) local observable (the corresponding cost function will then be referred to as the local cost function) and (b) global observable $O$ (the corresponding cost function will then be the global cost function), from which the former is advantageous over the latter, as proved by \cite{cerezo2021cost}. The global observable is the one in which the corresponding cost function depends on the simultaneous measurement on all of the qubits.  Such an observable is highly sensitive to parameter change and once it is affected, the information will flow throughout the circuit. In deep and randomly initialized variational circuits where state of the system becomes scrambled, these observables mix the information from many qubits by averaging out in an extremely strong way. As a result, an exponential decay can be observed in the variance of the gradient as the number of qubits increases, making classical optimizers unable to detect the tiny meaningful information. This exponential decay is responsible for making the circuit untrainable and the optimization landscape flat. The local cost function, on the other hand, is the one in which the observable's support remains manageable or constant, i.e., it acts non-trivially on a small number of qubits, no matter how large the system is. In this way, the gradient of the variational circuit having a local cost function will only depend on the variational parameters that affect the set of qubits that are being measured, or more generally, the parameters lying inside the light-cone generated by the measured qubits. A light-cone is referred to as the region of the circuit that is connected with the qubit which we are going to measure (see Fig. \ref{figure-9}a). No matter how many working qubits do we have, this light-cone will remain the same in size. This will create a bound region, in which the variance of the gradient will remain at least polynomially large, which guaranties the trainability of the circuit.
\begin{figure}[h!]
    \centering
	\begin{minipage}{0.4\textwidth}
        \centering
        {\small (a)}
        \includegraphics[width=\linewidth]{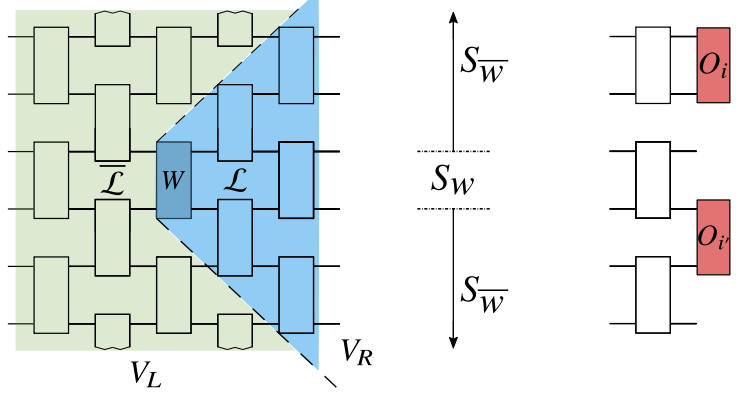}
    \end{minipage}
    \begin{minipage}{0.25\textwidth}
        {\small (b)}
        \centering
        \includegraphics[width=\linewidth]{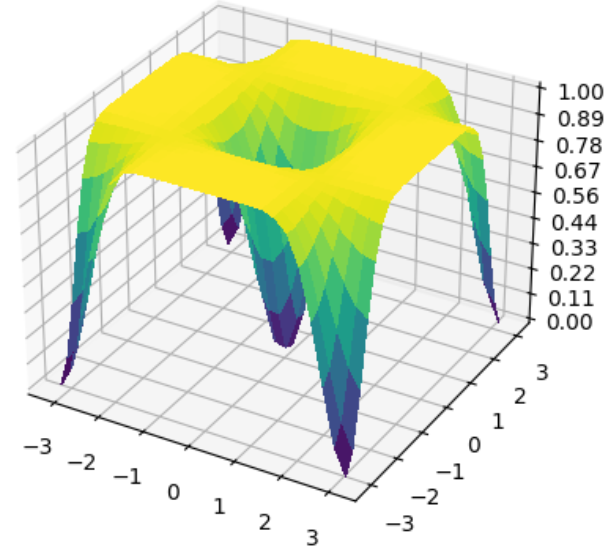}
    \end{minipage}
    \begin{minipage}{0.25\textwidth}
        \centering
        {\small (c)}
        \includegraphics[width=\linewidth]{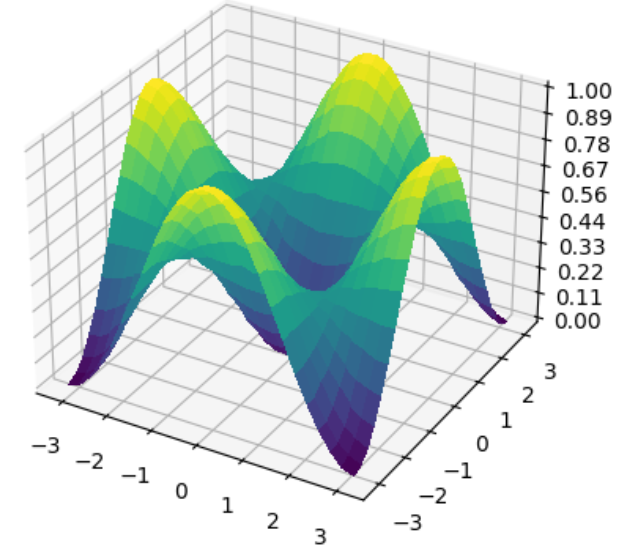}
    \end{minipage}
    \begin{minipage}{1\textwidth}
        \centering
        \includegraphics[width=\linewidth]{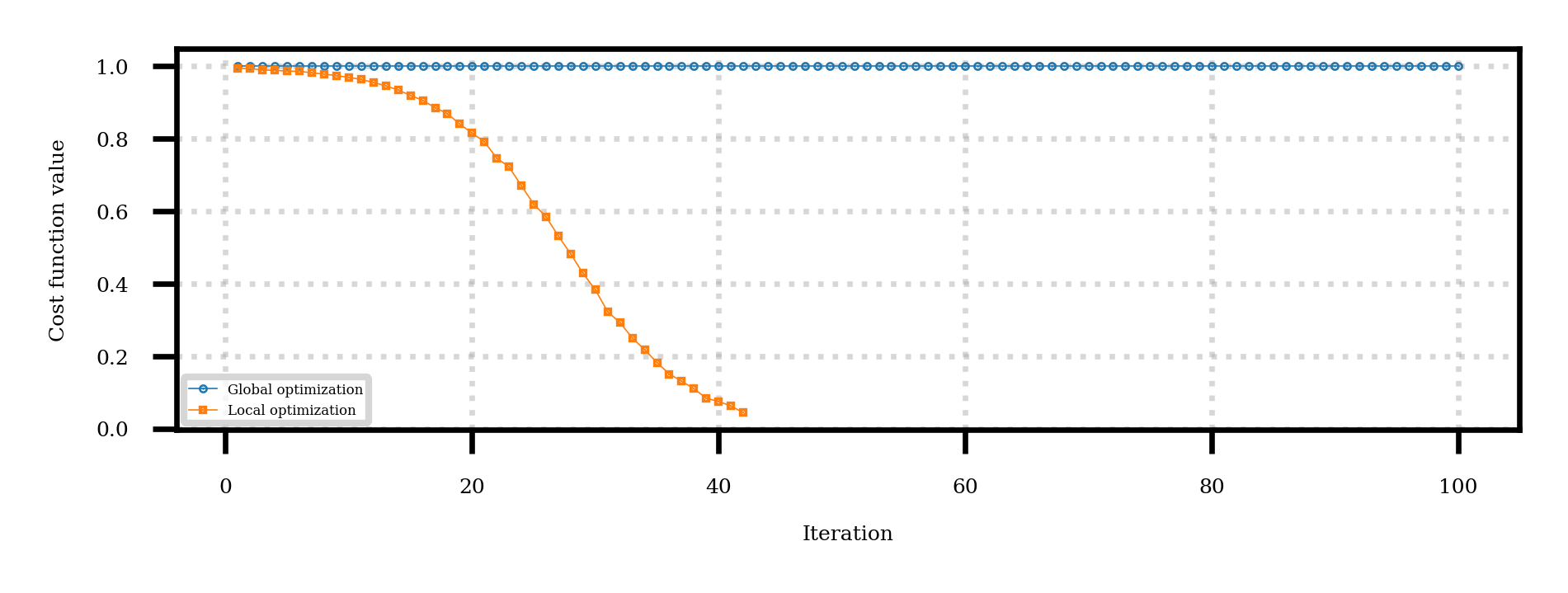}
    \end{minipage}
      \caption{(a) A forward light-cone $\mathcal{L}$ spanning the gates $V_{R}$ corresponding to the unitary block $W$. $S_{w}$ is the subset of the qubits of interest or those which are being measured by a local observable $O_{i}$. (This image is taken from \cite{cerezo2021cost}).  (b) A difficult to train global cost function plotted across 6 qubits, showing much of the landscape flat. The effect gets worse with increase in the number of qubits. (c) Local cost function across same number of qubits and ansatz, representing the trainability of the circuit and decreased effect of barren plateaus. (Both (b) and (c) are taken from \cite{ThomasStorwick2020}). The bottom plot is the comparison between global and local cost function minimization. The global cost is stuck to $1$ making the variational circuit untrainable on new set of experiments with new variational parameters while the local cost function shows the minimization as iterations increases. (\textit{source code: \url{https://github.com/ali-tayyab/Reducing-Barren-Plateau-using-Local-Cost-Functions}})}
   \label{figure-9}
\end{figure}

The visual demonstration of this problem and the effect of shifting from global to local cost function have been shown in \cite{ThomasStorwick2020} (a pennylane's demo based on \cite{cerezo2021cost}). The author used a hardware efficient ansatz in which the circuit is built using repeated rotations along $X$ and $Y$ on a 6 qubit system with multiple repeated blocks to scramble the state of the system. We leverage this work to plot local and global cost versus the number of iterations.
\section{Conclusion}\label{section-5}
Linearity violation in quantum mechanics could have dramatic operational consequences in leveraging quantum computers to solve nonlinear problems. We need to re-engineer the current quantum hardware on the principles of nonlinear variants of quantum mechanics in order to solve a nonlinear system on it directly. To avoid this long term challenge, researchers in the field have been working on the linear representations of a nonlinear system using linearization frameworks. These frameworks are very useful for utilizing the quantum advantage in solving nonlinear systems, since current quantum computers are capable of solving linear systems only. In this study, we use Carleman linearization to address the decomposition problem for a truncated linear system using an efficient decomposition model of non-unitary operators.  To illustrate the exponential reduction in the decomposition terms of the Hamiltonian, we use a python script (\textit{\url{https://github.com/ali-tayyab/Carleman-Embedding}}) to decompose sparse matrices containing different nonzero patterns into the weighted sum of the Sigma basis. However, numerical and semi-analytical techniques have also been studied in \cite{gnanasekaran2024efficient}. We compare this decomposition with the traditional decomposition of the Hamiltonian into a linear combination of Pauli operators. 

No matter how efficiently we can decompose a Hamiltonian into the weighted sum of any arbitrary basis, if the underlying basis set contains operators that cannot be implemented on a quantum computer, all is vein. For this reason, it is quite obvious for any reader to mark a question on the non-unitarity of the components of $\mathbb{S}$. We answer this question by constructing a general circuit to compute each weighted tensor product component of the decomposed Hamiltonian using unitary completion. 

The large number of terms in the Hamiltonian of any real-world physical system constrains the scalability of the established Hamiltonian simulation methods. For this, researchers worked on the sparsification of the actual Hamiltonian to reduce the number of terms\cite{ouyang2020compilation}. Such methods reduce the computational cost at the expense of exactness, which is critical in several sensitive quantum systems. Exploring these methods and comparing the circuit depth along with the solution accuracy with our proposed model is an important direction for future work. \newpage
{\bf Acknowledgements.}
The author would like to express sincere gratitude to Abeynaya Gnanasekaran and Amit Surana for their valuable lectures during the WISER Quantum Program 2025. The author also thanks the organizers of this summer program for providing an excellent learning environment.
\bibliography{citations}
\bibliographystyle{unsrt}
\end{document}